\newtheorem{lemma}{Lemma}
\theoremstyle{definition}
\newtheorem{definition}{Definition}
\newtheorem{protocol}{Protocol}
\newtheorem{example}{Example}
\newtheorem{remark}{Remark}
\title{Marginal sets in semigroups and semirings}
\author{I. Buchinskiy}
\address{I. Buchinskiy\,\orcidlink{0000-0001-8637-9127}, Sobolev Institute of Mathematics, Omsk, Russia}
\email{buchvan@mail.ru}
\author{M. Kotov}
\address{M. Kotov\,\orcidlink{0000-0002-2820-1053}, Independent Researcher\footnote{The second author conducted the research while working at the Sobolev Institute of Mathematics.}}
\email{matvej.kotov@gmail.com}
\author{A. Ponmaheshkumar}
\address{A. Ponmaheshkumar\,\orcidlink{0009-0005-4744-5247},
Department of Mathematics, SRM Institute of Science and Technology, Kattankulathur, Tamil Nadu, India}
\email{pa3258@srmist.edu.in}
\author{R. Perumal}
\address{R. Perumal\,\orcidlink{0000-0002-7137-0485},
Department of Mathematics, SRM Institute of Science and Technology, Kattankulathur, Tamil Nadu, India}
\email{perumalr@srmist.edu.in}
\begin{document}
\begin{abstract}
In 2019, V.~A.~Roman'kov introduced the concept of marginal sets for groups. He developed a theory of marginal sets and demonstrated how these sets can be applied to improve some key exchange schemes. In this paper, we extend his ideas and introduce the concept of marginal sets for semigroups and semirings. For tropical matrix semigroups and semirings, we describe how some marginal sets can be constructed. We apply marginal sets to improve some key exchange schemes over semigroups.
\end{abstract}

\date{\today}

\maketitle

\section{Introduction}

The concept of marginal sets for groups, introduced by Roman'kov, was inspired by the classical notion of marginal subgroups. However, the generalization to marginal sets is very different from the original concept. For context, recall the classical definition of a marginal subgroup introduced by P.~Hall~\cite{Hall1940}.

\begin{definition}
Let $\mathcal{G}$ be a group and let $w(x_1, \ldots, x_n)$ be a group word. A normal subgroup $\mathcal{N}$ is said to be $w$-\textit{marginal} in $\mathcal{G}$ if 
$$
w(ag_1,  g_2, \ldots, g_n) = 
w(g_1,  ag_2, \ldots, g_n) = 
\ldots
w(g_1,  g_2, \ldots, ag_n) = 
w(g_1,  g_2, \ldots, g_n)
$$
for all $a \in \mathcal{N}$, $g_i \in G$, $1 \leq i \leq n$.
\end{definition}

V. Roman'kov introduced the following generalization of this definition~\cite{Romankov2019A, Romankov2019B, Romankov2019C, Romankov2021, Romankov2022}.

\begin{definition}
Let $\mathcal{G}$ be a group, $w(x_1, \ldots, x_n)$ be a group word, and $g = (g_1, \ldots, g_n)$ be a tuple of elements of $\mathcal{G}$. A tuple $c = (c_1, \ldots, c_n)$ of elements of $\mathcal{G}$ is called a \textit{marginal tuple} with respect to $w$ and $g$ if 
$$
w(c_1g_1, \ldots, c_ng_n) = w(g_1, \ldots, g_n).
$$
We denote this relation by $c \perp w(g)$. A set $C$ of $n$-tuples of elements of $\mathcal{G}$ is said to be \textit{marginal} with respect to $w$ and $g$ if $c \perp w(g)$ for every tuple $c \in C$. In this case, we write $C \perp w(g)$.
\end{definition}

\begin{example}
For example, let $w = [x_1, x_2]$ and let $g = (g_1, g_2)$ be an arbitrary pair of elements of a group $\mathcal{G}$. Any tuple of the form $c = (c_1, 1),  c_1 \in G$ such that $[c_1, g_2] = 1$ is marginal with respect to $w$ and $g$, because 
\begin{multline*}
[c_1g_1, g_2] = 
(c_1g_1)^{-1} g_2^{-1} c_1g_1 g_2 =
g_1^{-1} c_1^{-1} g_2^{-1} c_1 g_1 g_2 = 
g_1^{-1} c_1^{-1} g_2^{-1} c_1 g_2 g_2^{-1} g_1 g_2 = 
g_1^{-1} [c_1, g_2] g_2^{-1} g_1 g_2 = \\
g_1^{-1} g_2^{-1} g_1 g_2 = 
[g_1, g_2].
\end{multline*}
\end{example}

Note that a marginal set does not have to be a subgroup. It can be chosen to be a very wild set. For example, it does not even have to be recursive computable. There are different ways to construct marginal sets. 
Since any solution $(c_1, \ldots c_n)$ to the equation 
$
w(x_1g_1, \ldots, x_ng_n) = w(g_1, \ldots, g_n)
$
is a marginal tuple with respect to $w$ and $g$, 
Roman'kov considered two randomized algorithms for constructing marginal sets: Algorithm~\ref{Algo1} and Algorithm~\ref{Algo2}.

\begin{algorithm}
\caption{Generate a set $C$ such that $C \perp w(g)$} \label{Algo1}
\begin{algorithmic}
\State \textbf{Input:} A group $\mathcal{G}$, a word $w(x_1, \ldots, x_n) = x_1 \cdots x_n$, a tuple $g = (g_1, \ldots, g_n)$ of elements of $\mathcal{G}$, and a~number of iterations $m$
\State \textbf{Output:} A set $C$ such that $C \perp w(g)$
\State Initialize $C \gets \emptyset$
\For{$k = 1$ to $m$}
    \State Randomly select $i \in \{1, \ldots, n\}$
    \State Randomly select values $c_j \in \mathcal{G}$ for $j \neq i$ 
    \State $c_i \gets g_{i - 1}^{-1}c_{i - 1}^{-1}\cdots g_1^{-1}c_1^{-1}g_1 \cdots g_n g_n^{-1}c_n^{-1}\cdots g_{i+1}^{-1}c_{i + 1}^{-1}$
    \State $C \gets C\cup \{(c_1, \ldots, c_n)\}$
\EndFor
\State \Return $C$
\end{algorithmic}
\end{algorithm}

\begin{algorithm}
\caption{Generate a tuple $c'$ such that $c' \perp w(g)$}\label{Algo2}
\begin{algorithmic}
\State \textbf{Input:} A group $\mathcal{G}$, a word $w(x_1, \ldots, x_n) = x_1 \cdots x_n$, a tuple $g = (g_1, \ldots, g_n)$ of elements of $\mathcal{G}$, a marginal tuple $c = (c_1, \ldots, c_n)$ with respect to $w$ and $g$, and a number of iterations $m$.
\State \textbf{Output:} A tuple $c'$ such that $c' \perp w(g)$.

\State Initialize $c' \gets c$
\For{$k = 1$ to $m$}
    \State Randomly select $i \in \{1, \ldots, n\}$
    \State Randomly select $d \in \mathcal{G}$
    \State $c'_i \gets c'_i g_i d g_i^{-1}$
    \State $c'_{i+1} \gets d^{-1} c'_{i+1}$
\EndFor
\State \Return $c'$
\end{algorithmic}
\end{algorithm}

In~\cite{Romankov2019A}, Roman'kov employed marginal sets to improve the Anshel--Anshel--Goldfeld protocol. Here, we present a simplified version of his protocol, focusing only on the use of marginal sets and omitting additional improvements such as permutations of published elements, virtual elements, and hidden elements.

\begin{protocol}
Suppose that two correspondents, Alice and Bob, want to exchange a key. They agree on a public group $\mathcal{G} = \left<G, \cdot, {}^{-1}, e\right>$ explicitly given by a finite set of generators. We suppose that the main group operations can be computed efficiently and that the word problem for $\mathcal{G}$ is solvable efficiently.
\begin{enumerate}
\item 
Alice fixes a positive integer $k$ and chooses a public tuple of elements $a = (a_1, \ldots, a_k)$. She selects a private group word $u = u(x_1, \ldots, x_k)$ and computes $u(a) = u(a_1, \ldots, a_n)$. She constructs a public marginal set $C \subseteq G^k$, $C \perp u(a)$. She publishes the tuple $a$ and the set $C$.
\item
Bob fixes a positive integer $l$ and chooses a public tuple of elements $b = (b_1, \ldots, b_l)$. He selects a private group word $v = v(y_1, \ldots, v_l)$ and computes $v(b) = v(b_1, \ldots, b_l)$. He constructs a public marginal set $D \subseteq G^l$, $D \perp v(b)$. He publishes the tuple $b$ and the set $D$.

\item Alice selects a tuple $d \in D$ and computes $db = (d_1b_1, \ldots, d_lb_l)$. She sends Bob the tuple $$(db)^{u(a)} = ((d_1b_1)^{u(a)}, \ldots, (d_lb_l)^{u(a)}).$$

\item Bob selects a tuple $c \in C$ and computes $ca = (c_1a_1, \ldots, c_ka_k)$. He sends Alice the tuple 
$$(ca)^{v(b)} = ((c_1a_1)^{v(b)}, \ldots, (c_ka_k)^{v(b)}).$$

\item Alice computes $K = u(a)^{-1}u((ca)^{v(b)}) $.

\item Bob computes $K = (v((db)^{u(a)}))^{-1}v(b)$.
\end{enumerate}

Note that they
share the same key, since $ u(a)^{-1}u((ca)^{v(b)}) = (v((db)^{u(a)}))^{-1}v(b) = [u(a), v(b)]$.
\end{protocol}

Roman'kov extended the concept of marginal sets to associative algebras~\cite{Romankov2022}. Recall that an \textit{algebra word} in variables $x_1, \ldots, x_n$ is a linear combination of monomials in $x_1, \ldots, x_n$.

\begin{definition}
Let $\mathcal{M} = \left<M, {+}, {-}, {\cdot}, 0, 1\right>$ be an associative algebra over a field $\mathcal{F}$, $w = w(x_1, \ldots, x_n)$ be an algebra word, and $g = (g_1, \ldots, g_n) \in M^n$ be a tuple of elements of $M$. We say that $c = (c_1, \ldots, c_n) \in M^n$ is a \textit{marginal tuple} with respect to $w$ and $g$ if 
$$
w(c_1g_1, \ldots c_ng_n) = w(g_1, \ldots, g_n).
$$
We denote this relation by $c \perp w(g)$ in this case. A set $C \in M^n$ is said to be \textit{marginal} with respect to $w$ and $g$ if
$c \perp w(g)$ for every tuple $c \in C$. In this case, we also write $C \perp w(g)$.
\end{definition}

\begin{example}
Let $w(x_1, x_2) = x_1^2 - x_2$ be an algebra word. For any pair $(g_1, g_2) \in M^2$ with invertible $g_2$, a tuple of the form $c = (c_1, c_1g_1c_1g_1g_2^{-1} - g_1^2g_2^{-1} + 1) \in M^2$ is marginal with respect to $w(x_1, x_2)$ and $(g_1, g_2)$. Indeed,
$w(c_1g_1, c_2g_2) = 
w(c_1g_1, (c_1g_1c_1g_1g_2^{-1} - g_1^2g_2^{-1} + 1)g_2) = 
w(c_1g_1, (c_1g_1)^2 - g_1^2 + g_2) = 
(c_1g_1)^2 - (c_1g_1)^2 + g_1^2 - g_2 =
g_1^2 - g_2 = w(g_1, g_2).
$
\end{example}

Roman'kov offered Algorithm~\ref{Algo3} for constructing marginal sets in associative algebras.

\begin{algorithm}
\caption{Generate a set $C$ such that $C \perp w(g)$} \label{Algo3}
\begin{algorithmic}
\State \textbf{Input:} An associative algebra $\mathcal{M}$, an algebra word $w(x_1, \ldots, x_n)$, a tuple $g = (g_1, \ldots, g_n)$ of elements of $\mathcal{M}$, and a number of iterations $m$
\State \textbf{Output:} A set $C$ such that $C \perp w(g)$
\State Initialize $C \gets \emptyset$
\For{$k = 1$ to $m$}
    \State Randomly select $i \in \{1, \ldots, n\}$ such that $x_i$  can be expressed in the form
$$x_i = x_i(x_1, \ldots, x_{i - 1}, x_{i + 1}, \ldots, x_n, g_1, \ldots, g_n)$$
    \State Randomly select values $c_j \in \mathcal{M}$ for $j \neq i$ 
    \State $c_i \gets  x_i(c_1, \ldots, c_{i - 1}, c_{i + 1}, \ldots, c_n, g_1, \ldots, g_n)$
    \State $C \gets C\cup \{(c_1, \ldots, c_n)\}$
\EndFor
\State \Return $C$
\end{algorithmic}
\end{algorithm}

Roman'kov used marginal sets to improve the Diffie--Hellman protocol in~\cite{Romankov2022}. Here, we also present a simplified version of this protocol,  focusing only on the use of marginal sets.

\begin{protocol}
Let $\mathcal{G}$ be a group. Let $\mathcal{A}, \mathcal{B}$ be finitely generated elementwise permutable subgroups of $\mathcal{G}$. Consider the algebra $\mathcal{M} = \mathrm{Alg}(\mathcal{G})$ generated by $\mathcal{G}$ and fix an element $g \in \mathcal{M}$.
$\mathrm{Fix}(\mathcal{A}) = \{f \in \mathcal{G} : f^a = f \text{ for all } a \in \mathcal{A}\}$,
$\mathrm{Fix}(\mathcal{B}) = \{h \in \mathcal{G} :  h^b = h \text{ for all } b \in \mathcal{B}\}$. These structures are public.

\begin{enumerate}
\item Alice chooses an algebra word $u = u(x_1, \ldots, x_k)$ and a tuple $a = (a_1, \ldots, a_k)$ of elements of $\mathcal{M}$ such that $g = u(a_1, \ldots, a_k)$. She also chooses a tuple of elements $h = (h_1, \ldots, h_k) \in \mathrm{Fix}(\mathcal{B})^k$ and $ah = (a_1h_1, \ldots, a_kh_k)$. Alice constructs a marginal set $C$, $C \perp u(a_1, \ldots, a_k)$. She publishes $ah$ and $C$.

\item Bob chooses an algebra word $v = v(y_1, \ldots, y_l)$ and a tuple $b = (b_1, \ldots, b_l)$ of elements of $\mathcal{M}$ such that $g = v(b_1, \ldots, b_l)$. He also chooses a tuple of elements $f = (f_1, \ldots, f_l) \in \mathrm{Fix}(\mathcal{A})^l$ and computed $bf$. He
constructs a marginal set $D$, $D \perp v(b_1, \ldots, b_l)$.
He publishes $bf$ and $D$.

\item 
Alice chooses a random element $s \in \mathcal{A}$ and a random tuple $d \in D$.
She computes $$(dbf)^s = ((d_1b_1f_1)^s, \ldots, (d_lb_lf_l)^s)$$ and sends the result to Bob.

\item Bob chooses a random element $t \in \mathcal{B}$ and a random tuple $c \in C$.
He computes $$
(c a h)^t = ((c_1 a_1 h_1)^t, \ldots,(c_k a_k h_k)^t)
$$ and sends the result to Alice.

\item Alice computes $K = u((cah)^t h^{-1})^s$.

\item Bob computes $K = v((dbf)^s f^{-1})^t$.
\end{enumerate}

They share the same key because $u((cah)^t h^{-1})^s = u((ca)^t h^t h^{-1})^s =
u((ca)^t h h^{-1})^s  = u((ca)^t))^s = (u(ca))^{ts} = (u(a))^{ts} = g^{ts}$ and
$v((dbf)^s f^{-1})^t = v((db)^s f^s f^{-1})^t =
v((db)^s f f^{-1})^t = v((db)^s))^t = (v(db))^{st} = (v(b))^{st} = g^{st} = g^{ts}$.
\end{protocol}

Let $\mathcal{G} = \left<G, {\cdot}\right>$ be a semigroup. Recall that a set $H \subseteq G$ 
is called pairwise commuting if $a \cdot b = b \cdot a$ for all $a, b \in H$. 

Sidelnikov, Cherepnev, and Yaschenko proposed the following key exchange method based on non-commutative semigroups~\cite{SidelnikovCherepnevYaschenko1994}. 
Note that a similar idea of using non-abelian groups was proposed by Stickel~\cite{Stickel2005}.

\begin{protocol}\label{ProtocolSidelnikov}
Let $\mathcal{G}$ be a non-commutative semigroup, ${H}$ and
${R}$ be pairwise commuting subssets of $\mathcal{G}$, and $W \in \mathcal{G}$. 
\begin{enumerate}
\item Alice chooses two elements $p_1 \in {H}$ and $q_1 
\in {R}$ as her secret key. She
computes $u = p_1 \cdot w \cdot q_1$ and sends it to Bob.

\item Bob chooses two elements $p_2 \in {H}$ and $q_2 \in {R}$ as his secret key. He
computes $v = p_2 \cdot w \cdot q_2$ and sends it to Alice.

\item Alice computes the common secret key $k = p_1 \cdot v \cdot q_1$.

\item Bob computes the common secret key $k = p_2 \cdot u \cdot q_2$.
\end{enumerate}
They share the same key because $p_1 \cdot (p_2 \cdot w \cdot q_2) \cdot q_1 = p_2 \cdot (p_1 \cdot w \cdot q_1) \cdot q_2$.
\end{protocol}

The success of this method is determined by the choice of $\mathcal{G}$, ${H}$, and ${R}$. Grigoriev and Shpilrain~\cite{GrigorievShpilrain2014} introduced the idea of using matrices over tropical semirings in this scheme. Two classical tropical semirings are the max-plus algebra $\mathcal{R}_{{\max}, {+}} = \left< \mathbb{R} \cup \{-\infty\}, \oplus, \otimes \right>$, where \( a \otimes b = a + b \) and \( a \oplus b = \max(a, b) \), and the min-plus algebra $\mathcal{R}_{{\min}, {+}} = \left<\mathbb{R} \cup \{\infty\}, \oplus, \otimes\right>$, where \( a \otimes b = a + b \) and \( a \oplus b = \min(a, b) \). There are two reasons for this choice. First, it helps avoid linear algebra attacks. Second, the operations can be performed quickly and efficiently. 

This approach is now known as tropical cryptography, and numerous papers have been published in this area in recent years. Below, we highlight some of these papers.

In the original paper by Grigoriev and Shpilrain~\cite{GrigorievShpilrain2014}, tropical matrix polynomials were used. Kotov and Ushakov \cite{KotovUshakov2018} demonstrated that this original tropical key exchange scheme is insecure.
Huang et al.~\cite{Huang} and Amutha and Perumal~\cite{Amutha} proposed using tropical upper-$t$-circulant, tropical lower-$t$-circulant, and tropical anti-$t$-$p$-circulant matrices.
Attacks on these protocols were offered by Buchinskiy et al.~\cite{BuchinskiyKotovTreier2024A} and Alhussaini et al.~\cite{AlhussainiCollettSergeev2023}.
Last year, Alhussaini and Sergeev~\cite{AlhussainiSergeev2024,sergev,alhussaini2024security} analyzed various generalizations of this scheme, using mixed-integer linear programming and optimization heuristics.
Recently, Otero S{\'a}nchez~\cite{Otero2024} considered a new approach for analyzing similar schemes.

Grigoriev and Shpilrain in~\cite{Grigoriev2} offered a new scheme based on extensions of tropical matrix algebras by homomorphisms in an attempt to destroy patterns in powers of elements of a platform algebra. This approach has been analyzed by Rudy and Monico \cite{Rudy}, Isaac and Kahrobaei~\cite{Issac}, Muanalifah and Sergeev~\cite{Muanalifah}, and Jackson and Perumal~\cite{jack}. 

Besides the classical tropical structures, max-times, min-times, digital, and other semirings have been proposed \cite{Huang1, mahesh1, ponmaheshkumar2023robust, mahesh2, Durcheva2014}. The analyses of these schemes can be found in~\cite{AlhussainiCollettSergeev2024,PonmaheshkumarKotovPerumal2025, BuchinskiyKotovTreier2025B}. Ponmaheshkumar and Ramalingam~\cite{ponmaheshkumar2024multi} and Ponmaheshkumar and Perumal~\cite{mahesh2} considered schemes based on supertropical semirings.

Recently, Chen et al.~\cite{ChenGrigorievShpilrain2023} suggested using tropical algebras as platforms for a digital signature protocol.

The analysis of tropical protocols has inspired new research problems in tropical mathematics. Buchinskiy et al. investigated the complexity of tropical knapsack-type problems~\cite{BuchinskiyKotovTreier2025A} and the complexity of solving tropical polynomial systems of degree two~\cite{BuchinskiyKotovTreier2024B}.

For more information on tropical algebra, we refer the reader to~\cite{Butkovic2010}. For more information on
non-commutative cryptography, in which different non-commutative algebraic structures such as non-commutative groups, semigroups, and rings are used as platforms for cryptographic protocols, we refer
the reader to~\cite{MyasnikovShpilrainUshakov2011}.

Let us go back to Protocol~\ref{ProtocolSidelnikov}.   
One way to attack this scheme is as follows.

Suppose that $p_1$ and $q_1$ can be presented as linear combinations:
$$p_1 = \sum_{i=1}^m{x_i}a_i \quad and \quad q_1 = \sum_{j=1}^n{y_j}b_j.$$ 
Then,
$$u = \left(\sum_{i=1}^m{x_i}a_i\right) \cdot w \cdot \left(\sum_{j=1}^n{y_j}b_j\right) = 
\sum_{i=1}^m\sum_{j=1}^n {x_i} \cdot y_j \cdot (a_i \cdot w \cdot b_j).$$
Of course, we can try to find $x_i$ and $y_j$ solving corresponding system of equations. However, it is possible to do the following trick. Suppose that we can find $z_{ij}$ such that
$$u = \sum_{i=1}^m\sum_{j=1}^n z_{ij} \cdot (a_i \cdot w \cdot b_j).$$
Then, we can compute the secret key in the following way:
$$\sum_{i=1}^m\sum_{j=1}^n z_{ij} \cdot a_i \cdot v \cdot b_j = K.$$
Indeed, this works because
$$\sum_{i=1}^m\sum_{j=1}^n z_{ij} \cdot a_i \cdot v \cdot b_j =
\sum_{i=1}^m\sum_{j=1}^n z_{ij} \cdot a_i \cdot p_2 \cdot w \cdot q_2 \cdot b_j =
p_2 \cdot \left(\sum_{i=1}^m\sum_{j=1}^n z_{ij} \cdot a_i \cdot w \cdot b_j\right) \cdot q_2 =
p_2 \cdot u \cdot q_2 = K.
$$
Note that this trick is possible because we are able to swap $a_i$ with $p_2$ and $q_2$ with $b_j$ in the expression above. So, to avoid this trick, we need to prevent the possibility of swapping.
One of the ways to do this is to change the expressions for $v$ and $u$ by using marginal tuples. We give three different schemes in Section~\ref{SecProto}. Our schemes differ from Roman'kov's because semigroups lack multiplicative inverses, and semirings lack both multiplicative and additive inverses.

The remainder of this paper is structured into five parts. In Section~\ref{SecPrelim}, we recall some definitions from tropical algebra. Section~\ref{SecDef} contains our definitions of marginal sets in semigroups and semirings and some examples of marginal sets in tropical matrix algebras. In Section~\ref{SecProto}, we consider three abstract key exchange protocols with marginal sets based on semigroups and give examples of these protocols using tropical matrix algebras.
The final section offers a conclusion to the work.

\section{Preliminaries}\label{SecPrelim}
In this section, we recall the definitions of the max-plus semiring, the min-plus semiring, tropical matrix semirings, and related constructions.

The \textit{max-plus semiring} is the algebraic structure $\mathcal{R}_{{\max}, {+}}= \left<\mathbb{R} \cup \{-\infty\}, \oplus, \otimes\right>$, where the operations $\oplus$ and $\otimes$ are defined as follows:
\[
a \oplus b = \max(a, b) \quad \text{and} \quad a \otimes b = a + b
\]
for all $a, b \in \mathbb{R} \cup \{-\infty\}$. The additive identity is $-\infty$, and the multiplicative identity is $0$. 

The \textit{min-plus semiring} is the algebraic structure $\mathcal{R}_{{\min}, {+}} = \left<\mathbb{R} \cup \{+\infty\}, \oplus, \otimes\right>$ with the operations $\oplus$ and $\otimes$ defined as follows:
\[
a \oplus b = \min(a, b) \quad \text{and} \quad a \otimes b = a + b
\]
for all $a, b \in \mathbb{R} \cup \{+\infty\}$.
Here, the additive identity is $+\infty$, and the multiplicative identity remains $0$. 

Let $\mathcal{S} = \left<S, \otimes, \oplus\right>$  be a semiring. The set of $n \times n$ matrices over $\mathcal{S}$, denoted $\mathrm{Mat}(n, {S})$, can be equipped with matrix addition and multiplication defined as follows:
\[
(A \oplus B)_{ij} = a_{ij} \oplus b_{ij}, \qquad (A \otimes B)_{ij} = \bigoplus_{k=1}^n a_{ik} \otimes b_{kj}
\]
for $A = (a_{ij})$ and $B = (b_{ij})$ in $\mathrm{Mat}(n, S)$.
The obtained structure is an idempotent semiring as well.
We denote the semiring of matrices of size $k \times k$ over $\mathcal{R}_{{\min}, {+}}$ by $\mathcal{M}_{{\min}, {+}}^k$ and 
the semiring of matrices of size $k \times k$ over $\mathcal{R}_{{\max}, {+}}$ by $\mathcal{M}_{{\max}, {+}}^k$.

In this paper, we use capital letters for matrices and the corresponding lowercase letters for elements of matrices: $x_{ij}$ is an element of $X$.

Let $\mathcal{S} = \left<S, \otimes, \oplus\right>$ be a tropical semiring. A \textit{tropical polynomial} over $\mathcal{S}$ in one variable $x$ is an expression of the form
\[
p(x) = a_0 \oplus a_1 \otimes x \oplus a_2 \otimes x^{\otimes 2} \oplus \dots \oplus a_k \otimes x^{\otimes k},
\]
where $a_i \in S$, and $x^{\otimes i}$ denotes $\underbrace{x \otimes \dots \otimes x}_{i \text{ times}}$. 

Let $A \in \mathrm{Mat}(n, {S})$ be a matrix and $p(x) \in \mathcal{S}[x]$ be a polynomial over $\mathcal{S}$. Define $p(A)$ as follows:
\[p(A) = a_0 \otimes I \oplus a_1 \otimes A \oplus a_2 \otimes A^{\otimes 2} \oplus \dots \oplus a_k \otimes A^{\otimes k},
\]
where $I$ is the tropical identity matrix of size $n \times n$.

It is easy to see that for any two polynomials $p(x)$ and $q(x)$, the matrices $p(A)$ and $q(A)$ commute:
$$
p(A) \otimes q(A) = q(A) \otimes p(A). 
$$

For different implementations of Protocol~\ref{ProtocolSidelnikov}, the following types of tropical matrices were considered.

An $n \times n$ matrix is called \textit{circulant} if each row is a cyclic right shift of the previous row: 
\begin{equation}\label{CircMat} 
	\begin{pmatrix}
		c_{1}& c_{2}&\cdots & c_{n}\\
		c_{n} &c_{1}& \cdots &c_{n-1}\\
		\vdots &  & \ddots & \vdots \\
		c_{2} &c_{3}& \cdots & c_{1}\\
	\end{pmatrix}
\end{equation}
The matrix~(\ref{CircMat}) is denoted by $ C(c_1, \ldots, c_n)$.
Since any circulant matrix is a matrix polynomial in the cyclic permutation matrix $C(o, \ldots, o, e)$, any two circulant matrices commute.

A matrix of the form
	\[
	\begin{pmatrix}
		c_{1} & t \otimes c_{2} & t \otimes c_{3} & \cdots & t \otimes c_{n} \\
		c_{n} & c_{1} & t \otimes c_{2} & \cdots & t \otimes c_{n-1} \\
		c_{n-1} & c_{n} & c_{1} & \cdots & t \otimes c_{n-2} \\
		\vdots & \vdots & \ddots & \ddots & \vdots \\
		c_{2} & c_{3} & \cdots & c_{n} & c_{1} \\
	\end{pmatrix}
	\]
 is called \textit{upper-$t$-circulant}. 
 
Analogously, a matrix of the form
	\[
	\begin{pmatrix}
		c_{1} & c_{2} & c_{3} & \cdots & c_{n} \\
		s \otimes c_{n} & c_{1} & c_{2} & \cdots & c_{n-1} \\
		s \otimes c_{n-1} & s \otimes c_{n} & c_{1} & \cdots & c_{n-2} \\
		\vdots & \vdots & \ddots & \ddots & \vdots \\
		s \otimes c_{2} & s \otimes c_{3} & \cdots & s \otimes c_{n} & c_{1} \\
        \end{pmatrix}
	\]
is called \textit{lower-$s$-circulant}.

Any two upper-$t$-circulant matrices and any two lower-$s$-circulant matrices commute~\cite{Amutha, Huang,BuchinskiyKotovTreier2024A}.

The following type of matrices was considered by D. Jones~\cite{Jones2017}. A matrix $A = (a_{ij}) \in \mathrm{Mat}(n, \mathcal{R}_{{\max}, {+}})$ is called a \textit{Jones matrix} if it satisfies the following condition:
\[
	a_{ij} \oplus a_{jk} \leq a_{ik} \oplus a_{jj}, \quad \forall i,j,k \in \{1, \ldots, n\}.
\]

Let $A = (a_{ij})$ be a Jones matrix and $\alpha \in \mathbb{R}$. The matrix $A^{(\alpha)} = (b_{ij})$ defined by
$$
b_{ij} = a_{ij} \otimes (a_{ii} \oplus a_{jj})^{\otimes(\alpha - 1)}
$$
is called a \textit{deformation} of $A$.
Muanalifah and Sergeev proved~\cite{MuanalifahSergeev2020} that
if $A$ is a Jones matrix, then $$A^{(\alpha)} \otimes B^{(\beta)} = A^{(\beta)} \otimes A^{(\alpha)}$$ for any $\alpha$ and $\beta$ such that $0 \leq \alpha \leq 1$ and $0 \leq \beta \leq 1$.

Let $r \geq 0$ and $k \leq 0$ be real numbers. A matrix $A \in \mathrm{Mat}(n, \mathcal{R}_{\min, {+}})$ such that
	$a_{ii} = k$  and $a_{ij} \in [r, 2r]$, $i \neq j$,
is called a \textit{Linde--de la Puente matrix}~\cite{LindedelaPuente2015, MuanalifahSergeev2020}.
The set of such matrices is denoted by $[r, 2r]^k_n$. 
It was proved~\cite{MuanalifahSergeev2020} that for $A \in [r, 2r]^{k_1}_n$ and $B \in [s, 2s]^{k_2}_n$ such that $r, s \geq 0$ and $k_1, k_2 \leq 0$,
$A \otimes B = B \otimes A$. The definition for Linde--de la Puente matrices over $\mathcal{R}_{{\max}, +}$ is similar.

\section{Marginal sets in semigroups and semirings}\label{SecDef}

In this section, we introduce the concept of marginal sets for semigroups and semirings. We give algorithms for constructing some marginal sets for tropical matrix algebras. 

In the following definition, $\square_1, \ldots, \square_n$ are variables; we use the symbol $\square$ instead of a letter to highlight their special role.

\begin{definition}\label{def1}
Let $\mathcal{S} = \left<S, {\cdot}, e\right>$ be a semigroup with an identity element $e$, let $(a_1, \ldots, a_m) \in S^m$, and let $w(x_1, \ldots, x_m, \square_1, \ldots, \square_n)$ be a semigroup word. A~tuple $(c_1, \ldots, c_n) \in S^n$ is called $w(a_1, \ldots, a_m, \square_1, \ldots, \square_n)$-\textit{marginal}  if
\begin{equation}\label{MainEq}
w(a_1, \ldots, a_m, c_1, \ldots, c_n) = 
w(a_1, \ldots, a_m, e, \ldots, e).
\end{equation}
We denote this by writing $(c_1, \ldots, c_n) \perp w(a_1, \ldots, a_m, \square_1, \ldots, \square_n)$.
Any subset $C \subseteq S^n$ of $w(a_1, \ldots, a_m, \allowbreak \square_1, \ldots, \square_n)$-marginal tuples is called a $w(a_1, \ldots, a_m, \square_1, \ldots, \square_n)$-\textit{marginal} set.
We denote this by writing $C \perp w(a_1, \ldots, a_m, \square_1, \ldots, \square_n)$.
\end{definition}

\begin{example}
Consider the semigroup of tropical matrices of size $3 \times 3$ over the min-plus algebra. Let $w(x, \square) = x \otimes \square$ and 
$$A = \!\begin{pmatrix}
3 & 7 & 4 \\ 
5 & 12 & 7 \\
6 & 5 & 11 
\end{pmatrix}\!.
$$
It is easy to check that the set 
$$C
= 
\!\left\{\!\!\begin{pmatrix}
0 & 7 & 5 \\
1 & 0 & 6 \\
-1 & 5 & 0
\end{pmatrix}\!,
\begin{pmatrix}
0 & 7 & 5 \\
1 & 0 & 6\\
0 & 5 & 0
\end{pmatrix}\!\!
\right\}
$$
is $(A \otimes \square)$-marginal.
\end{example}

Note that marginal sets are subsets of the solution sets of the corresponding equations~(\ref{MainEq}).  For example, if $a \in S$, then $(c) \perp a \cdot \square$ for any solution $c$ to the equation $ax = a$.

The following lemmas describe the solutions to the following equations over $\mathcal{M}_{\min, +}^k$:
$A \otimes X = A$, $X \otimes A = A$, $X \otimes A \otimes Y = A$, and $A \otimes X \otimes B \otimes Y \otimes C =  A \otimes B \otimes C$. Note that it is possible to prove similar lemmas for $\mathcal{M}_{\max, +}^k$.

\begin{lemma}\label{lemma1}
Let $A \in \mathrm{Mat}(k, \mathbb{R})$. A matrix $X \in \mathrm{Mat}(k, \mathbb{R})$ is a solution to $A \otimes X = A$ over $\mathcal{M}_{\min, +}^k$ if and only if $X \geq X^*(A)$ and
$$\bigcup\limits_{(i,j) \; : \; x_{ij} = x^{*}_{i, j}(A)} M_{ij}(A) = \{1, \ldots, k\} \times \{1, \ldots, k\},$$
	where 
    $x^{*}_{ij}(A) = \max_l (a_{lj} \otimes a_{li}^{\otimes  -1})$ and 
    $M_{ij}(A) = \{(l, j) \; : \; x^{*}_{ij}(A) = a_{lj} \otimes a_{li}^{\otimes -1}\}$.
\end{lemma}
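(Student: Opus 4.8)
The plan is to expand the matrix identity $A\otimes X = A$ entrywise and split it into the two one-sided relations $A\otimes X \ge A$ and $A\otimes X \le A$ (entrywise, in the usual order on $\mathbb{R}$), whose conjunction is the equation. First I would record that $(A\otimes X)_{ij} = \min_{l}(a_{il}+x_{lj})$, so $(A\otimes X)_{ij}\ge a_{ij}$ for all $i,j$ is equivalent to $a_{il}+x_{lj}\ge a_{ij}$ for all $i,j,l$, i.e. to $x_{lj}\ge a_{ij}-a_{il}$ for all $i$; maximizing over $i$ yields $x_{lj}\ge \max_i(a_{ij}\otimes a_{il}^{\otimes -1}) = x^{*}_{lj}(A)$. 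Conversely, $X\ge X^{*}(A)$ makes each such inequality hold, so $A\otimes X\ge A$. Hence $A\otimes X\ge A \iff X\ge X^{*}(A)$. (This is where the hypothesis $A\in\mathrm{Mat}(k,\mathbb{R})$ is needed: it guarantees that the differences $a_{ij}-a_{il}$ are defined and that $X^{*}(A)$ is a finite real matrix.)

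Next, assuming $X\ge X^{*}(A)$, I would determine when $(A\otimes X)_{ij}\le a_{ij}$. Since $a_{il}+x_{lj}\ge a_{ij}$ already holds for every $l$, we have $(A\otimes X)_{ij}\le a_{ij}$ if and only if $a_{il}+x_{lj}=a_{ij}$ for some $l$. The key point is that $a_{il}+x_{lj}=a_{ij}$ forces $x_{lj}=a_{ij}-a_{il}$, and since $x_{lj}\ge x^{*}_{lj}(A)\ge a_{ij}-a_{il}$, this is possible only when $x_{lj}=x^{*}_{lj}(A)$ and, at the same time, $x^{*}_{lj}(A)=a_{ij}\otimes a_{il}^{\otimes -1}$, i.e. $(i,j)\in M_{lj}(A)$. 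Conversely, if some $l$ has $x_{lj}=x^{*}_{lj}(A)$ and $(i,j)\in M_{lj}(A)$, then $a_{il}+x_{lj}=a_{il}+x^{*}_{lj}(A)=a_{ij}$, so the minimum is attained. Thus, under $X\ge X^{*}(A)$, the entry equality $(A\otimes X)_{ij}=a_{ij}$ holds exactly when $(i,j)\in \bigcup_{l\,:\,x_{lj}=x^{*}_{lj}(A)} M_{lj}(A)$.

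Finally I would assemble the pieces: $A\otimes X=A$ holds iff $X\ge X^{*}(A)$ and every pair $(i,j)$ lies in $\bigcup_{l\,:\,x_{lj}=x^{*}_{lj}(A)} M_{lj}(A)$. Because each set $M_{lj}(A)$ consists of pairs whose second coordinate is $j$, the family $\{M_{lj}(A):x_{lj}=x^{*}_{lj}(A)\}$ obtained as $l,j$ vary coincides with the family $\{M_{ij}(A):x_{ij}=x^{*}_{ij}(A)\}$ of the statement, so "every $(i,j)$ is covered" is exactly $\bigcup_{(i,j)\,:\,x_{ij}=x^{*}_{ij}(A)} M_{ij}(A)=\{1,\ldots,k\}\times\{1,\ldots,k\}$, as claimed.

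There is no serious obstacle: the proof is careful index bookkeeping in the min-plus matrix product. The one place that deserves attention is the "attainment" half — noticing that, once the lower bound $X\ge X^{*}(A)$ is imposed, a term $a_{il}+x_{lj}$ can equal $a_{ij}$ only if $x_{lj}$ sits exactly on its lower bound and that bound is realized by the pair $(i,j)$, which is precisely what the sets $M_{ij}(A)$ of maximizers encode. I would also note in passing that interchanging $\max$ and $\min$ throughout gives the analogous characterization over $\mathcal{M}_{\max,+}^{k}$.
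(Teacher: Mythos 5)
Your proof is correct, and your index bookkeeping matches the lemma's conventions: the lower bound $X\ge X^{*}(A)$ is exactly the condition $A\otimes X\ge A$, and under it the attainment of the minimum at entry $(i,j)$ is precisely membership of $(i,j)$ in some $M_{lj}(A)$ with $x_{lj}=x^{*}_{lj}(A)$, which (since each $M_{lj}$ lives in column $j$) is the covering condition as stated. The route, however, differs from the paper's. The paper does not argue entrywise at all: it rewrites $A\otimes X=A$ as a one-sided system of $k^{2}$ tropical linear equations in the $k^{2}$ unknowns $x_{ij}$ and then invokes Theorem~3.1.1 of Butkovi\v{c}, the standard characterization of solutions of one-sided min-plus systems via the principal solution and a covering condition; the stated $X^{*}(A)$ and $M_{ij}(A)$ are just that theorem's data specialized to this system. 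Your argument is, in effect, a self-contained reproof of that theorem in this special case. What you gain is transparency and independence from the reference — the two halves (principal-solution bound, attainment/maximizer sets) are visible and you see exactly where finiteness of the entries of $A$ is used; what the paper's reduction buys is brevity and the fact that the cited theorem already handles the general semiring setting (including infinite entries), which also explains why Lemma~2 and the degree-two lemmas later in the section follow by the same one-line reduction. Either way the mathematical content is the same, so your proof could stand as an alternative, reference-free justification.
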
	

\begin{proof}
Note that the equation $A \otimes X = A$ can be rewritten as a one-sided system of linear equations:
$$
    \begin{array}{lll}
        a_{11} \otimes x_{11} \oplus a_{12} \otimes x_{21} \oplus \ldots \oplus a_{1k} \otimes x_{k1}               & = & a_{11}, \\
        \qquad a_{11} \otimes x_{12} \oplus a_{12} \otimes x_{22} \oplus \ldots \oplus a_{1k} \otimes x_{k2}        & = & a_{12}, \\
        \hdotsfor{3} \\
        \qquad\qquad a_{11} \otimes x_{1k} \oplus a_{12} \otimes x_{2k} \oplus \ldots \oplus a_{1k} \otimes x_{kk}  & = & a_{1k}, \\
        a_{21} \otimes x_{11} \oplus a_{22} \otimes x_{21} \oplus \ldots \oplus a_{2k} \otimes x_{k1}               & = & a_{21}, \\
        \qquad a_{21} \otimes x_{12} \oplus a_{22} \otimes x_{22} \oplus \ldots \oplus a_{2k} \otimes x_{k2}        & = & a_{22}, \\
        \hdotsfor{3} \\
        \qquad\qquad a_{21} \otimes x_{1k} \oplus a_{22} \otimes x_{2k} \oplus \ldots \oplus a_{2k} \otimes x_{kk}  & = & a_{2k}, \\
        \hdotsfor{3}\\
        a_{k1} \otimes x_{11} \oplus a_{n2} \otimes x_{21} \oplus \ldots \oplus a_{kk} \otimes x_{k1}               & = & a_{k1}, \\
        \qquad a_{k1} \otimes x_{12} \oplus a_{k2} \otimes x_{22} \oplus \ldots \oplus a_{kk} \otimes x_{k2}        & = & a_{k2}, \\
        \hdotsfor{3} \\
        \qquad\qquad a_{k1} \otimes x_{1k} \oplus a_{k2} \otimes x_{2k} \oplus \ldots \oplus a_{kk} \otimes x_{kk}  & = & a_{kk}.
    \end{array}
$$
Applying Theorem 3.1.1 from \cite{Butkovic2010} to this system of equations, we obtain the desired result.  
\end{proof}

For $X \otimes A = A$, we have a similar lemma.

\begin{lemma}\label{lemma2}
Let $A \in \mathrm{Mat}(k, \mathbb{R})$. A matrix $X \in \mathrm{Mat}(k, \mathbb{R})$ is a solution to $X \otimes A = A$ over $\mathcal{M}_{\min, +}^k$ if and only if $X \geq X^*(A)$ and
$$\bigcup\limits_{(i,j) \; : \; x_{ij} = x^{*}_{i, j}(A)} M_{ij}(A) = \{1, \ldots, k\} \times \{1, \ldots, k\},$$
	where 
    $x_{ij}^*(A) = \max_l (a_{il} \otimes a_{jl}^{\otimes -1})$ and 
    $M_{ij}(A) = \{(i, l) \; : \; x_{ij}^*(A) = a_{il} \otimes a_{jl}^{\otimes -1}\}$.
\end{lemma}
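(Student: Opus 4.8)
The plan is to deduce Lemma~\ref{lemma2} from Lemma~\ref{lemma1} by transposition. In any tropical matrix semiring one has $(X \otimes A)^{T} = A^{T} \otimes X^{T}$, so $X$ is a solution of $X \otimes A = A$ over $\mathcal{M}_{\min,+}^{k}$ if and only if $X^{T}$ is a solution of $A^{T} \otimes X^{T} = A^{T}$ over $\mathcal{M}_{\min,+}^{k}$. Lemma~\ref{lemma1}, applied to the matrix $A^{T}$, then describes exactly when $X^{T}$ is such a solution: $X^{T} \geq X^{*}(A^{T})$ together with $\bigcup_{(i,j)\,:\,(x^{T})_{ij} = x^{*}_{ij}(A^{T})} M_{ij}(A^{T}) = \{1,\dots,k\}^{2}$, where $x^{*}_{ij}(A^{T}) = \max_{l}\bigl((A^{T})_{lj} \otimes (A^{T})_{li}^{\otimes -1}\bigr)$ and $M_{ij}(A^{T}) = \{(l,j) : x^{*}_{ij}(A^{T}) = (A^{T})_{lj} \otimes (A^{T})_{li}^{\otimes -1}\}$.

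It then remains to translate each ingredient back to $A$ and $X$. First, $(A^{T})_{lj} \otimes (A^{T})_{li}^{\otimes -1} = a_{jl} \otimes a_{il}^{\otimes -1}$, so $x^{*}_{ij}(A^{T}) = \max_{l}(a_{jl} \otimes a_{il}^{\otimes -1}) = x^{*}_{ji}(A)$ with $x^{*}(A)$ as in the statement of Lemma~\ref{lemma2}; since $(x^{T})_{ij} = x_{ji}$, the condition $X^{T} \geq X^{*}(A^{T})$ is literally $x_{ji} \geq x^{*}_{ji}(A)$ for all $i,j$, i.e.\ $X \geq X^{*}(A)$. Second, one checks that $M_{ij}(A^{T})$ is obtained from $M_{ji}(A)$ by swapping the two coordinates of every pair it contains, and that the index set over which the union is taken, namely those $(i,j)$ with $x_{ji} = x^{*}_{ji}(A)$, is the image under $(i,j) \mapsto (j,i)$ of the set of $(i,j)$ with $x_{ij} = x^{*}_{ij}(A)$. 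Since the coordinate swap is a bijection of $\{1,\dots,k\}^{2}$ onto itself and commutes with unions, the covering condition for $A^{T}$ is equivalent to $\bigcup_{(i,j)\,:\,x_{ij} = x^{*}_{ij}(A)} M_{ij}(A) = \{1,\dots,k\}^{2}$, which is exactly the condition in Lemma~\ref{lemma2}.

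I expect the only real work to be this index bookkeeping — keeping straight which coordinate is summed over and checking that the transpose respects the covering condition; there is no new algebraic content beyond Lemma~\ref{lemma1}. Alternatively, and more in the spirit of the proof of Lemma~\ref{lemma1}, one can avoid transposition and instead rewrite $X \otimes A = A$ directly as the one-sided system $\bigoplus_{l} x_{il} \otimes a_{lj} = a_{ij}$ in the unknowns $x_{il}$, group the equations by the row index $i$, and apply Theorem~3.1.1 of~\cite{Butkovic2010}; here the only subtlety is the $\min$-plus convention, under which the principal (smallest) solution involves a $\max$, yielding the stated formulas $x^{*}_{ij}(A) = \max_{l}(a_{il} \otimes a_{jl}^{\otimes -1})$ and $M_{ij}(A) = \{(i,l) : x^{*}_{ij}(A) = a_{il} \otimes a_{jl}^{\otimes -1}\}$.
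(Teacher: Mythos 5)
Your argument is correct, but your primary route differs from the paper's. The paper proves Lemma~\ref{lemma2} exactly the way you sketch only at the end of your proposal: it notes that the proof is ``similar to the proof of Lemma~\ref{lemma1}'', i.e.\ one rewrites $X \otimes A = A$ as the one-sided system $\bigoplus_l x_{il} \otimes a_{lj} = a_{ij}$ and applies Theorem~3.1.1 of \cite{Butkovic2010} again, which is where the formulas $x^{*}_{ij}(A) = \max_l(a_{il} \otimes a_{jl}^{\otimes -1})$ and $M_{ij}(A) = \{(i,l) : x^{*}_{ij}(A) = a_{il} \otimes a_{jl}^{\otimes -1}\}$ come from. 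Your main argument instead derives Lemma~\ref{lemma2} formally from Lemma~\ref{lemma1} via transposition, using $(X \otimes A)^{T} = A^{T} \otimes X^{T}$ (valid here because scalar $\otimes$, i.e.\ ordinary addition, is commutative) and then carefully translating $x^{*}_{ij}(A^{T}) = x^{*}_{ji}(A)$, $X^{T} \geq X^{*}(A^{T}) \Leftrightarrow X \geq X^{*}(A)$, and the fact that $M_{ij}(A^{T})$ is the coordinate swap of $M_{ji}(A)$, so the covering conditions correspond under the bijection $(i,j) \mapsto (j,i)$; I checked these identities and they are right. What your route buys is that Lemma~\ref{lemma2} becomes a corollary of Lemma~\ref{lemma1} with no second appeal to Butkovi\v{c}'s theorem, at the price of the index bookkeeping you describe; what the paper's route buys is brevity and symmetry of presentation, since the same one-sided-system argument applies verbatim with rows and columns exchanged. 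Either proof is acceptable, and your closing remark about the $\min$-plus convention (the principal solution being a componentwise maximum) is the correct reading of how Theorem~3.1.1 is being used.
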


\begin{proof}
The proof is similar to the proof of Lemma~\ref{lemma1}.
\end{proof}

There are different ways to construct marginal sets. Here, we suggest algorithms for generating marginal sets using Lemmas~\ref{lemma1} and \ref{lemma2}.

Note that 
$T^{*}= \{M_{11}, M_{22}, \ldots, M_{kk}\}$ forms a cover of $\{1, \ldots, k\} \times \{1, \ldots, k\}$. Indeed, 
$$x_{jj}^{*} = \max_l (a_{lj} - a_{lj}) = \max_l (0) = 0$$
and
$$M_{jj} = \{(l, j) \; : \; x_{jj}^* = a_{lj} - a_{lj}\} = 
\{(l, j) \; : \; 0 = 0\} = \{(l, j)\}_l.$$
Thus, 
$$\bigcup_{j=1}^{k} M_{jj} = \bigcup_{j=1}^k \{(l, j)\}_{1 \leq l \leq k} = \{1, \ldots, k\} \times \{1, \ldots, k\}.$$
Therefore, we can consider the following Algorithm~\ref{AlgoAXeqA2}.
\begin{algorithm}
\caption{Generate a set $C$ such that $C \perp A \otimes \square$} 
\label{AlgoAXeqA2}
\begin{algorithmic}
\State \textbf{Input:} A matrix $A \in \mathrm{Mat}(k, \mathbb{Z})$, an integer $n$, an integer $l$
\State \textbf{Output:} A set $C$ such that $C \perp A \otimes \square$ and $|C| = n$
\State $C \gets \emptyset$
\State $X^{*} \gets  (\max_p (a_{pj} - a_{pi}))_{1 \leq i \leq k, 1 \leq j \leq k}$
   \State $\hat{X} \gets \Call{GetMaxPossibleMatrix}{T^{*}, X^{*}, l}$
\While{$|C| < N$}
     \State Randomly select $X$ such that $X^{*} \leq X \leq \hat{X}$
    \State $C \gets C \cup \{X\}$
\EndWhile
\State \Return $C$
\end{algorithmic}
\end{algorithm}

The function $\Call{GetMaxPossibleMatrix}{T, X^{*}, l}$ returns the matrix 
$P = (p_{ij})$, where
$$
p_{ij} = 
\left\{
\begin{array}{ll}
x^{*}_{ij} & \text{if } (i, j) \in T, \\
\max(l, x^{*}_{ij}) & \text{if } (i, j) \notin T.
\end{array}
\right. 
$$

Note that the loop in Algorithm~\ref{AlgoAXeqA2} and the following algorithms is potentially infinite. To avoid that, we can add a counter and break the loop after some number of iterations.

\begin{example} Let $l = 100$, $n = 3$, and
$$
A =
\begin{pmatrix}
0 & 85 & -6 \\
-72 & 53 & -97 \\
-72 & 52 & -69
\end{pmatrix}\!.
$$
Then, 
$$
X^{*} = 
\begin{pmatrix}
0 & 125 & 3 \\
-85 & 0 & -91 \\
25 & 150 & 0
\end{pmatrix}
\quad
\text{and}
\quad
\hat{X} = 
\begin{pmatrix}
0 & 125 & 100 \\
100 &  0 &  100 \\
100 &  150 & 0
\end{pmatrix}\!.
$$
The algorithm can randomly generate the following three matrices between $X^{*}$ and $\hat{X}$:
$$
C =
\left\{\!
\begin{pmatrix}
0 & 125 & 65 \\
29 & 0 & -51 \\
61 & 150 & 0 
\end{pmatrix}\!,
\begin{pmatrix}
0 & 125 &  14 \\
29 & 0 & -91 \\
88 & 150 & 0 
\end{pmatrix}\!,
\begin{pmatrix}
0 & 125 & 76 \\
20 & 0 & -68 \\
71 & 150 & 0
\end{pmatrix}\!
\right\}.
$$
\end{example}

Similarly, using Lemma~\ref{lemma2}, we can write an algorithm to find a marginal set $C$ such that $C \perp \square \otimes A$; see Algorithm~\ref{AlgoXAeqA2}.

\begin{algorithm}
\caption{Generate a set $C$ such that $C \perp \square \otimes A$} 
\label{AlgoXAeqA2}
\begin{algorithmic}
\State \textbf{Input:} A matrix $A$, a number $n$, a number $l$
\State \textbf{Output:} A set $C$ such that $C \perp \square \otimes A$ and $|C| = N$
\State $C \gets \emptyset$
\State $X^{*} \gets  (\max \{a_{ip} - a_{jp}\}_p)_{1 \leq i \leq k, 1 \leq j \leq k}$
   \State $\hat{X} \gets \Call{GetMaxPossibleMatrix}{T^{*}, X^{*}, l}$

\While{$|C| < n$}
     \State Randomly select $X$ such that $X^{*} \leq X \leq \hat{X}$
    \State $C \gets C \cup \{X\}$
\EndWhile
\State \Return $C$
\end{algorithmic}
\end{algorithm}

\begin{remark}
Note that different compression techniques can be used to transfer marginal sets.
For example, if
$$
C = 
\left\{\!
\begin{pmatrix}
2 & 3 \\
4 & 5
\end{pmatrix}\!,
\begin{pmatrix}
2 & 4 \\
4 & 5
\end{pmatrix}\!,
\begin{pmatrix}
2 & 5 \\
4 & 5
\end{pmatrix}\!,
\begin{pmatrix}
2 & 6 \\
4 & 5
\end{pmatrix}\!,
\begin{pmatrix}
2 & 7 \\
4 & 5
\end{pmatrix}\!,
\begin{pmatrix}
2 & 3 \\
5 & 5
\end{pmatrix}\!,
\begin{pmatrix}
2 & 4 \\
5 & 5
\end{pmatrix}\!,
\begin{pmatrix}
2 & 5 \\
5 & 5
\end{pmatrix}\!,
\begin{pmatrix}
2 & 6 \\
5 & 5
\end{pmatrix}\!,
\begin{pmatrix}
2 & 7 \\
5 & 5
\end{pmatrix}\!\right\}\!,
$$
then the following encoding can be transferred:
$$
\begin{pmatrix}
2 & [3, 7] \\
[4, 5] & 5
\end{pmatrix}\!,
$$
where $[a, b]$ means any integer between $a$ and $b$. Another technique is to send the first matrix and the difference between two consecutive matrices. For example, if 
$$
C = 
\left\{\!
\begin{pmatrix}
2 & 3 & 4 \\
4 & 5 & 1 \\
0 & 8 & 6
\end{pmatrix}\!,
\begin{pmatrix}
2 & 3 & 7 \\
4 & 5 & 1 \\
0 & 8 & 6
\end{pmatrix}\!,
\begin{pmatrix}
2 & 3 & 8 \\
4 & 5 & 2 \\
0 & 8 & 6
\end{pmatrix}\!\right\}\!,
$$
then the following encoding can be transferred:
$$
\begin{pmatrix}
2 & 3 & 4 \\
4 & 5 & 1 \\
0 & 8 & 6
\end{pmatrix}\!,
(((1, 3), 7)),
(((1, 3), 8), ((2, 3), 2)),
$$
where $((i, j), a)$ means that the element of the next matrix at position $(i,j)$ is $a$. 
\end{remark}

Now, consider equations of degree two. It is known that in general the problem to solve a system of tropical equations of degree two is $\mathsf{NP}$-hard~\cite{GrigorievShpilrain2014, BuchinskiyKotovTreier2024B}, but we will demonstrate that the system of equations in this section are simpler.

Note that a set $C = \{(M_1, M_2) : M_1 \in C_1, M_2 \in C_2\}$,
where $C_1 \perp \square \otimes A$ and $C_2 \perp A \otimes \square$ is $(\square_1 \otimes A \otimes \square_2)$-marginal. However, it is possible to construct more interesting $(\square_1 \otimes A \otimes \square_2)$-marginal sets.

\begin{lemma}\label{lemma3}
Let $A \in \mathrm{Mat}(k, \mathbb{R})$. A pair of matrices $(X, Y)$, $X, Y \in \mathrm{Mat}(k, \mathbb{R})$, is a solution to $X \otimes A \otimes Y = A$ over $\mathcal{M}_{\min, +}^k$ if and only if 
$x_{ip} \otimes y_{qj} \geq x_{ipqj}^{*}(A)$ and
for each $i$ and $j$, there exist $p$ and $q$ such that
\begin{equation}\label{Lemma3Cnd}
x_{iq} \otimes y_{pj} = x^{*}_{ipqj}(A),
\end{equation}
where 
$x_{ipqj}^*(A) = a_{ij} \otimes a_{pq}^{\otimes -1}$.
\end{lemma}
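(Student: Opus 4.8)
The plan is to reduce the matrix identity $X \otimes A \otimes Y = A$ to a family of scalar conditions by writing out the $(i,j)$-entry of the triple product and then applying the elementary description of when a minimum of finitely many real numbers equals a prescribed value.

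First, using associativity of tropical matrix multiplication, I would expand
\[
(X \otimes A \otimes Y)_{ij} = \bigoplus_{p=1}^{k}\bigoplus_{q=1}^{k} x_{ip} \otimes a_{pq} \otimes y_{qj} = \min_{1 \le p,\, q \le k}\bigl(x_{ip} + a_{pq} + y_{qj}\bigr),
\]
so that $X \otimes A \otimes Y = A$ is equivalent to requiring, for every pair $(i,j)$, that $\min_{p,q}(x_{ip} + a_{pq} + y_{qj}) = a_{ij}$. Note that this is a system that is \emph{bilinear} in the pair $(X,Y)$ rather than linear, so, unlike in Lemmas~\ref{lemma1} and~\ref{lemma2}, one cannot simply invoke Theorem~3.1.1 of~\cite{Butkovic2010}; the argument has to be carried out by hand.

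Second, I would use the triviality that, for real numbers $t_{pq}$ and $c$, one has $\min_{p,q} t_{pq} = c$ if and only if (i) $t_{pq} \ge c$ for all $p,q$ and (ii) $t_{pq} = c$ for at least one pair $(p,q)$. Instantiating with $t_{pq} = x_{ip} + a_{pq} + y_{qj}$ and $c = a_{ij}$ and moving $a_{pq}$ to the other side, condition (i) becomes $x_{ip} + y_{qj} \ge a_{ij} - a_{pq}$ for all $i,j,p,q$, i.e. $x_{ip} \otimes y_{qj} \ge x^{*}_{ipqj}(A)$ since $x^{*}_{ipqj}(A) = a_{ij} \otimes a_{pq}^{\otimes -1} = a_{ij} - a_{pq}$; and condition (ii) becomes: for each $(i,j)$ there exist $p,q$ with $x_{ip} + y_{qj} = a_{ij} - a_{pq}$, which is exactly the claimed equality $x_{ip} \otimes y_{qj} = x^{*}_{ipqj}(A)$ (up to the naming of the bound indices $p,q$). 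Since every step is an equivalence, running the implications in both directions yields the stated "if and only if".

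The main obstacle here is bookkeeping rather than mathematics: one must keep straight that the summation index $p$ attached to $X$ contracts with the first index of $A$ while the summation index $q$ attached to $Y$ contracts with the second index of $A$, and one must be precise about the quantifier structure, since the inequality is universal over all of $i,j,p,q$ whereas the equality is "for every $i,j$ there exist $p,q$". As a consistency check on the index conventions I would verify that specializing $Y$ to the identity matrix recovers the characterization of $X \otimes A = A$ from Lemma~\ref{lemma2} (and $X = I$ recovers Lemma~\ref{lemma1}).
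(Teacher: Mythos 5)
Your proof is correct, and its core is the same as the paper's: expand the $(i,j)$-entry of $X \otimes A \otimes Y$ and characterize when that tropical sum equals $a_{ij}$, obtaining the universal inequality $x_{ip} \otimes y_{qj} \geq x^{*}_{ipqj}(A)$ together with the existence, for each $(i,j)$, of a pair attaining equality (and you are right that the index placement in~(\ref{Lemma3Cnd}) is just a transcription slip in the statement). The one point where you diverge is your claim that Theorem~3.1.1 of~\cite{Butkovic2010} cannot be invoked because the system is bilinear: the paper does invoke it, after introducing new variables $t_{ipqj} = x_{ip} \otimes y_{qj}$, which makes the system formally linear in the $t$'s; since each $t_{ipqj}$ occurs in exactly one equation, the principal solution and the covering condition of that theorem collapse to exactly your elementary ``all terms $\geq$ and at least one term $=$'' criterion, and the conclusion transfers back to $(X,Y)$ because the conditions are stated directly on the sums $x_{ip} + y_{qj}$. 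So the two arguments are equivalent in content; yours is self-contained and arguably cleaner, while the paper's keeps the presentation uniform with Lemmas~\ref{lemma1}, \ref{lemma2} and the later generalizations, which all run through the same cited theorem. A small caveat on your sanity check: the tropical identity matrix has off-diagonal entries $+\infty$ and hence lies outside $\mathrm{Mat}(k,\mathbb{R})$ as used in the lemma, so the specialization $Y = I$ is only a heuristic consistency check, not a formal corollary of the statement as given.
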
	

\begin{proof}
Note that the equation can be rewritten in the following form:
$$ 
\bigoplus_{p = 1}^k \bigoplus_{q = 1}^k x_{ip} \otimes a_{pq} \otimes y_{qj} = a_{ij}. 
$$
Introduce new variables $t_{ipqj} = x_{ip} \otimes x_{qj}$. The system can be rewritten as
$$ 
\bigoplus_{p = 1}^k \bigoplus_{q = 1}^k (a_{pq} \otimes a_{ij}^{\otimes-1}) \otimes t_{ipqj} = 0. 
$$
To solve this system of linear equations, we can again apply Theorem 3.1.1 from \cite{Butkovic2010}. Since the variable $t_{ipqj}$ is presented only in one equation, 
$x_{ipqj}^*(A) = a_{ij} \otimes a_{pq}^{\otimes -1}$. The requirement that at least one term must be equal to 0 gives up the requirement~(\ref{Lemma3Cnd}).
\end{proof}

To apply this lemma to construct a marginal set $C$ such that $C \perp \square_1 \otimes A \otimes \square_2$, where $A \in \mathrm{Mat}(k, \mathbb{Z})$, consider the following linear programming problem. 
First, add the inequalities
$$
x_{ip} \otimes y_{qj} \geq x_{ipqj}^{*}(A)
$$
to the set of constraints. 
Second, for each $i$ and $j$ add the equation 
$$
x_{ii} \otimes y_{jj} = 0$$
to the set of constraints.
To add more randomness, 
select a random integer $d$, random integers $r_{ij}$, $i \neq j$, and $s_{ij}$, $i \neq j$, and add inequalities 
$$
\begin{array}{ll}
x_{ij} \geq r_{ij}, & 1 \leq i \leq k, 1 \leq j \leq k, i \neq j, \\
y_{ij} \geq s_{ij}, & 1 \leq i \leq k, 1 \leq k \leq k, i \neq j, \\
x_{ii} \geq d, & 1 \leq i \leq k, \\
y_{jj} \geq -d, & 1 \leq j \leq k, \\
\end{array}
$$
to the set of constraints as well. 

Note that it follows from Ghouila-Houri's Theorem ~\cite{GhouilaHouri1962} that the matrix of this linear programming problem is totally unimodular. Therefore, from the Hoffman--Kruskal Theorem\cite{HoffmanKruskal1956} it follows that the problem is integral, i.e. it has an integral optimum whenever an optimum exists. So, to solve this problem, we can use the simplex method or another method to solve a linear program. This gives us Algorithm~\ref{AlgoXAXeqA}. Note that each iteration of the algorithm runs in polynomial time.

\begin{algorithm}
\caption{Generate a set $C$ such that $C \perp \square_1 \otimes A \otimes \square_2$} 
\label{AlgoXAXeqA}
\begin{algorithmic}
\State \textbf{Input:} A matrix $A$, a number $n$, a number $l_1$, and a number $l_2$
\State \textbf{Output:} A set $C$ such that $C \perp \square_1  \otimes A \otimes \square_2$ and $|C| = n$
\State $C \gets \emptyset$
\While{$|C| < n$}
     \State $constraints \gets \{x_{ip} + y_{qj} \geq a_{ij} - a_{pq}\}_{1 \leq i,p,q,j \leq k}$
     \State $constraints \gets constraints \cup \{x_{ii} + y_{jj} = 0\}_{1 \leq i, j \leq k}$
    \State $d \gets \Call{GetRandom}{\{l_1, \ldots, l_2\}}$
     \For{$1 \leq i, j \leq k$}
     \If{$i \neq j$}
    \State $r_{ij} \gets \Call{GetRandom}{\{l_1, \ldots, l_2\}}$

    \State $s_{ij} \gets \Call{GetRandom}{\{l_1, \ldots, l_2\}}$
    \Else
    \State $r_{ij} \gets d$
    \State $s_{ij} \gets -d$
    \EndIf
    \EndFor
    \State $constraints \gets constraints \cup \{x_{ij} \geq r_{ij}\}_{ij}$
    \State $constraints \gets constraints \cup \{y_{ij} \geq s_{ij}\}_{ij}$
     \State $objective \gets 0$
     \State $(X, Y) \gets \Call{SolveLinearProgram}{constraints, objective}$ 
     \State $C \gets C \cup \{(X, Y)\}$
\EndWhile
\State \Return $C$
\end{algorithmic}
\end{algorithm}

\begin{example}
Consider how Algorithm~\ref{AlgoXAXeqA} generates one tuple.
Let $A = 
\begin{pmatrix}
3 & 2 \\ 
1 &5
\end{pmatrix}
$. Then the constraints are:
\begingroup
\allowdisplaybreaks
\begin{align*}
x_{11} + y_{11} &= 0, \\
x_{11} + y_{12} &\geq -1, \\
x_{11} + y_{21} &\geq 1, \\
x_{11} + y_{22} &= 0, \\
x_{12} + y_{11} &\geq 2, \\
x_{12} + y_{12} &\geq 1, \\
x_{12} + y_{21} &\geq -2, \\
x_{12} + y_{22} &\geq -3, \\
x_{21} + y_{11} &\geq -2, \\ 
x_{21} + y_{12} &\geq 2, \\
x_{21} + y_{21} &\geq -1, \ \\
x_{21} + y_{22} &\geq 3, \\
x_{22} + y_{11} &= 0, \\
x_{22} + y_{12} &\geq 4,  \\
x_{22} + y_{21} &\geq -4,  \\
x_{22} + y_{22} &= 0  
\end{align*}
We can randomly select the following lower bounds:
\begin{align*}
x_{11} &\geq 3, \\
x_{12} &\geq 9, \\
x_{21} &\geq 7, \\
x_{22} &\geq 3, \\
y_{11} &\geq -3, \\
y_{12} &\geq 4, \\
y_{21} &\geq 0, \\
y_{22} &\geq -3.
\end{align*}
\endgroup

The simplex method gives the following matrices:
$$
X =
\begin{pmatrix}
3 & 9 \\
7 & 3     
\end{pmatrix}
\quad
\text{and}
\quad
Y = 
\begin{pmatrix}
-3 & 4 \\ 0 & -3
\end{pmatrix}.
$$

It is easy to see that
$X \otimes A = 
\begin{pmatrix}
5 & 4 \\ 3 & 7
\end{pmatrix}
\neq A
$
and
$A \otimes Y =
\begin{pmatrix}
1 & 0 \\ -1 & 3
\end{pmatrix}
\neq A
$
but
$
X \otimes A \otimes Y =  
\begin{pmatrix}
3 & 2 \\ 1 &  5
\end{pmatrix}
= A. 
$
\end{example}

\begin{lemma}\label{lemma4}
Let $A, B, C \in \mathrm{Mat}(k, \mathbb{R})$. A pair of matrices $(X, Y)$, $X, Y \in \mathrm{Mat}(k, \mathbb{R})$, is a solution to 
$$A \otimes X \otimes B \otimes Y \otimes C = A \otimes B \otimes C$$ over $\mathcal{M}_{\min, +}^k$ if and only if 
$x_{pq} \otimes y_{rs} \geq x_{pqrs}^{*}(A, B, C)$ and
$$\bigcup\limits_{(p,q,r,s) \; : \; x_{pq} \otimes y_{rs} = x^{*}_{pqrs}(A, B, C)} M_{pqrs}(A, B, C) = \{1, \ldots, k\} \times \{1, \ldots, k\},$$
	where 
    $$x_{pqrs}^*(A, B, C) = 
    \max \{
    d_{ij} \otimes
    a_{ip}^{\otimes -1} \otimes b_{qr}^{\otimes -1} \otimes c_{sj}^{\otimes -1}
    \}_{i,j},$$
    $$d_{ij} = 
\bigoplus_{u = 1}^k \bigoplus_{v = 1}^k
a_{iu} \otimes b_{uv} \otimes c_{vj},$$
and
    $$M_{p q r s}(A, B, C) = \{(i, j) \; : \; x_{pqrs}^*(A) = 
    d_{ij} \otimes
    a_{ip}^{\otimes -1} \otimes b_{qr}^{\otimes -1} \otimes c_{sj}^{\otimes -1}
    \}.$$
\end{lemma}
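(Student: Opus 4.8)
The plan is to follow the template of the proofs of Lemmas~\ref{lemma1}--\ref{lemma3}: expand the matrix identity coordinatewise, linearize it through an auxiliary change of variables, and then apply Theorem 3.1.1 of~\cite{Butkovic2010} to the resulting one-sided tropical linear system. First I would set $D = A \otimes B \otimes C$, so that indeed $d_{ij} = \bigoplus_{u=1}^k\bigoplus_{v=1}^k a_{iu} \otimes b_{uv} \otimes c_{vj}$, and expand the $(i,j)$-entry of $A \otimes X \otimes B \otimes Y \otimes C$ to rewrite the equation $A \otimes X \otimes B \otimes Y \otimes C = D$ as the system
$$\bigoplus_{p=1}^k\bigoplus_{q=1}^k\bigoplus_{r=1}^k\bigoplus_{s=1}^k a_{ip} \otimes x_{pq} \otimes b_{qr} \otimes y_{rs} \otimes c_{sj} = d_{ij}, \qquad 1 \le i, j \le k.$$

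Next I would introduce the $k^4$ quantities $t_{pqrs} = x_{pq} \otimes y_{rs}$. Collecting the $a,b,c$ coefficients together and dividing the $(i,j)$-th equation by $d_{ij}$ (which is finite, since $A,B,C \in \mathrm{Mat}(k,\mathbb{R})$) gives the one-sided tropical linear system
$$\bigoplus_{p,q,r,s} \bigl(a_{ip} \otimes b_{qr} \otimes c_{sj} \otimes d_{ij}^{\otimes -1}\bigr) \otimes t_{pqrs} = 0, \qquad 1 \le i, j \le k,$$
in the unknowns $(t_{pqrs})$. This rewriting is reversible, so a pair $(X,Y)$ solves the original matrix equation if and only if the numbers $t_{pqrs} = x_{pq} \otimes y_{rs}$ solve this linear system; hence it suffices to characterize the solutions of the linear system. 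By Theorem 3.1.1 of~\cite{Butkovic2010}, its principal (least) solution has coordinates
$$t^{*}_{pqrs} = \max_{i,j}\bigl(d_{ij} \otimes a_{ip}^{\otimes -1} \otimes b_{qr}^{\otimes -1} \otimes c_{sj}^{\otimes -1}\bigr) = x^{*}_{pqrs}(A,B,C),$$
and a vector $(t_{pqrs})$ is a solution exactly when $t_{pqrs} \ge t^{*}_{pqrs}$ for all $(p,q,r,s)$ and the tight-index sets $M_{pqrs}(A,B,C)$, taken over those $(p,q,r,s)$ with $t_{pqrs} = t^{*}_{pqrs}$, cover $\{1,\dots,k\} \times \{1,\dots,k\}$. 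Rephrasing this in terms of $x_{pq} \otimes y_{rs}$ yields precisely the two conditions in the statement.

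The point requiring care, as opposed to Lemma~\ref{lemma3}, is that here each auxiliary variable $t_{pqrs}$ appears in \emph{every} equation of the reduced system rather than in exactly one, so the per-equation simplification used there is unavailable and the full covering condition of Theorem 3.1.1 must be retained --- which is why the conclusion parallels Lemmas~\ref{lemma1} and~\ref{lemma2}. I would also remark that no reconstruction of $(X,Y)$ from an arbitrary solution of the linearized system is needed: since both the hypothesis and the conclusion of the lemma are stated purely through the products $x_{pq} \otimes y_{rs}$, the equivalence between solving the matrix equation and solving the linear system, combined with the characterization from~\cite{Butkovic2010}, already closes the argument. The only real obstacle is bookkeeping: keeping the four-index summations, the normalization by $d_{ij}$, and the definitions of $x^{*}_{pqrs}$ and $M_{pqrs}$ consistent with one another.
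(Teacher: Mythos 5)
Your proposal follows essentially the same route as the paper's proof: expand the equation entrywise, introduce the auxiliary variables $t_{pqrs} = x_{pq} \otimes y_{rs}$, normalize by $d_{ij}^{\otimes -1}$ to get a one-sided linear system, and invoke Theorem 3.1.1 of~\cite{Butkovic2010}. Your additional observations --- that the covering condition cannot be simplified as in Lemma~\ref{lemma3} because each $t_{pqrs}$ now occurs in every equation, and that no reconstruction of $(X,Y)$ from a solution in the $t$-variables is needed since the conclusion is phrased through the products $x_{pq} \otimes y_{rs}$ --- are correct and in fact make explicit points the paper's terse proof leaves implicit.
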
	

\begin{proof}
Note that the equation can be rewritten in the following form:
$$ 
\bigoplus_{p = 1}^k \bigoplus_{q = 1}^k
\bigoplus_{r = 1}^k \bigoplus_{s = 1}^k
a_{ip} \otimes x_{pq} \otimes b_{qr} \otimes y_{rs} \otimes c_{sj} = 
\bigoplus_{u = 1}^k \bigoplus_{v = 1}^k
a_{iu} \otimes b_{uv} \otimes c_{vj}.
$$
Introduce new variables $t_{pqrs} = x_{pq} \otimes x_{rs}$. The system can be rewritten as
$$ 
\bigoplus_{p = 1}^k \bigoplus_{q = 1}^k
\bigoplus_{r = 1}^k \bigoplus_{s = 1}^k
(a_{ip} \otimes b_{qr} \otimes c_{sj} \otimes d_{ij}^{\otimes -1}) \otimes t_{pqrs} 
= 0.
$$
Applying Theorem 3.1.1 from \cite{Butkovic2010} to this system of equations completes the proof.
\end{proof}

Now, let us discuss how this lemma can we use to create an algorithm to generate marginal sets $M$, $M\perp A \otimes \square_1 \otimes B \otimes \square_2 \otimes C$, where $A \in \mathrm{Mat}(k, \mathbb{Z})$.

Note that
$$\bigcup\limits_{(p,r) \; : \; x^{*}_{pprr}(A, B, C) = 0} M_{pprr}(A) = \{1, \ldots, k\} \times \{1, \ldots, k\}.$$
To prove it, first, note that $x^{*}_{pprr}(A, B, C) \leq 0$.  
Indeed, 
$x^{*}_{pprr}(A, B, C) =
    \max_{i,j} (
    d_{ij} \otimes
    a_{ip}^{\otimes -1} \otimes b_{pr}^{\otimes -1} \otimes c_{rj}^{\otimes -1}
    ) = 
\max_{i,j} (
    \min_{u, v}(a_{iu} + b_{uv} + c_{vj}) - 
    (a_{ip} + b_{pr} + c_{rj})
    ) \leq 0$ because $
\min_{u, v}(a_{iu} + b_{uv} + c_{vj}) \leq 
    a_{ip} + b_{pr} + c_{rj}$.
Second, to find $p = p(i, j), q = q(i, j)$ for each $(i, j)$,
take any $(p, q) \in \mathrm{argmin}_{u, v}(a_{iu} + b_{uv} + c_{vj})$.

Therefore, consider the following linear programming problem. First, add the inequalities
$$
x_{pq} \otimes y_{rs} \geq x_{pqrs}^{*}(A, B, C)
$$
to the set of constraints. Second, 
let 
\begin{align*}    
P(A, B, C) &= \{(p, r) : x^*_{pprr}(A, B, C) = 0\}, \\
P_x(A, B, C) &= \{p : \exists\, r \text{ such that } x^*_{pprr}(A, B, C) = 0\}, \\
P_y(A, B, C) &= \{r : \exists\, p \text{ such that } x^*_{pprr}(A, B, C) = 0\}.
\end{align*}
For each $(p, r) \in P(A, B, C)$ add the equations
$$
x_{pp} \otimes y_{rr} = 0
$$
to the set of constraints.
Third, to add more randomness, 
select a random integer $h$ and integers $r_{ij}, s_{ij}$, 
such that
$$
\begin{array}{ll}
r_{ii} = h & \text{if } i \in P_x(A, B, C), \\
r_{ij} \text{ is a random integer} & \text{otherwise},
\end{array}
$$
and 
$$
\begin{array}{ll}
s_{jj} = -h & \text{if } j \in P_y(A, B, C), \\
s_{ij} \text{ is a random integer} & \text{otherwise},
\end{array}
$$
and add inequalities 

$$
\begin{array}{ll}
x_{ij} \geq r_{ij}, 1 \leq i \leq k, 1 \leq j \leq k,\\
y_{ij} \geq s_{ij}, 1 \leq i \leq k, 1 \leq j \leq k.
\end{array}
$$
to the set of constraints as well.  Similarly, to solve this problem, we can apply
the simplex method or another method for linear programming.

\begin{algorithm}
\caption{Generate a set $C$ such that $C \perp A \otimes \square_1 \otimes B \otimes \square_2 \otimes C$} 
\label{AlgoAXBYCeqABC}
\begin{algorithmic}
\State \textbf{Input:} Matrices $A, B$, and $C$, a number $n$, numbers $l_1$ and $l_2$
\State \textbf{Output:} A set $F$ such that $F \perp \square_1  \otimes A \otimes \square_2$ and $|D| = n$
\State $D = (\min_{u, v}(a_{iu} + b_{uv} + c_{vj}))_{1 \leq i \leq k, 1 \leq j \leq k}$
\State $X^* \gets (\max_{i, j}(d_{ij} - a_{ip} - b_{qr} - c_{sj}))_{1 \leq p \leq k, 1 \leq q \leq k, 1 \leq r \leq k, 1 \leq s \leq k}$
\State $P \gets \{(p, r) : x^*_{pprr} = 0\}$
\State $P_x \gets  \{p : \exists\, r \text{ such that } x^*_{pprr} = 0\}$
\State $P_y \gets \{r : \exists\, p \text{ such that } x^*_{pprr} = 0\}$
\State $F \gets \emptyset$
\While{$|F| < n$}
     \State $constraints \gets \{x_{pq} + y_{rs} \geq x_{pqrs}\}_{1 \leq p,q,r,s \leq k}$
     \State $constraints \gets constraints \cup \{x_{pp} + y_{qq} = 0\}_{(p, q) \in P}$
    \State $h \gets \Call{GetRandom}{\{l_1, \ldots, l_2\}}$
     \For{$1 \leq i, j \leq k$}
     \If{$i = j \wedge i \in P_x$}
    \State $r_{ii} \gets h$
    \Else
    \State $r_{ij} \gets \Call{GetRandom}{\{l_1, \ldots, l_2\}}$
    \EndIf 
    \If{$i = j \wedge j \in P_y$}
    \State $s_{jj} \gets -h$
    \Else
    \State $s_{ij} \gets \Call{GetRandom}{\{l_1, \ldots, l_2\}}$
    \EndIf 
    \EndFor
    \State $constraints \gets constraints \cup \{x_{ij} \geq r_{ij}\}_{i,j}$
    \State $constraints \gets constraints \cup \{y_{ij} \geq s_{ij}\}_{i,j}$
     \State $objective \gets 0$
     \State $(X, Y) \gets \Call{SolveLinearProgram}{constraints, objective}$ 
     \State $F \gets F \cup \{(X, Y)\}$
\EndWhile
\State \Return $F$
\end{algorithmic}
\end{algorithm}

\begin{example} 
Consider the following example:
$$
A = \begin{pmatrix} -4 & 6 & 2 \\ -2 & -3 & 10 \\ -2 & -9 & -5\end{pmatrix}\!,
B = \begin{pmatrix} -4 & -10 & -3 \\ 2 & -2 & 8 \\ 4 & -1 & 6\end{pmatrix}\!, \text{ and }
C = \begin{pmatrix} -9 & 9 & -2 \\ 5 & -2 & -8 \\ 3 & 3 & 8\end{pmatrix}\!.
$$
Since $X^{*}(A, B, C)$ is a four-way table, we will write it as a block matrix, where the first two indexes specify the position of the block and the last two indexes specify the position of the element within that block.
For these matrices
$$
X^{*}(A, B, C) = 
\left(
\begin{array}{ccc|ccc|ccc}
0 & -6 & -11 & -6 & -12 & -17 & -8 & -14 & -19 \\
6 & 0 & -5  & -2 & -8  & -13 & -3 & -9  & -14 \\
-1 & -7 & -12 & -12 & -18 & -23 & -10 & -16 & -21 \\ \hline
6 & 1 & -4  & 0 & -5  & -10 & -2 & -7  & -12 \\
12 & 7 & 2  & 4 & -1  & -6  & 3 & -2  & -7  \\
5 & 0 & -5  & -6 & -11 & -16 & -4 & -9  & -14 \\ \hline
2 & -3 & -8  & -4 & -9  & -14 & -6 & -11 & -16 \\
8 & 3 & -2  & 0 & -5  & -10 & -1 & -6  & -11 \\
1 & -4 & -9  & -10 & -15 & -20 & -8 & -13 & -18
\end{array}
\right)\!,
$$
$P = \{(1, 1), (1, 2), (2, 1)\}$, $P_x = \{1, 2\}$, and $P_y = \{1, 2\}$.

Let $R$ and $S$ be randomly selected as
$$
R = 
\begin{pmatrix}
-10 & -4 & -1 \\
1 &-10 & 0 \\
6 & -1 & -2 
\end{pmatrix}
\text{ and }
S = 
\begin{pmatrix}
10 & 10  & 3 \\
-9 &10 & 0 \\
5 & 0 & 0
\end{pmatrix}\!.
$$

Then the set of constraints is
$$
\begin{array}{ll}
x_{pq} + y_{rs} \geq x_{pqrs}, & 1 \leq p, q, r, s  \leq 3 \\
x_{11} + y_{11} = 0, & \\
x_{11} + y_{22} = 0, & \\
x_{22} + y_{11} = 0, & \\
x_{ij} \geq r_{ij}, & 1 \leq i,j \leq 3 \\
y_{ij} \geq s_{ij}, & 1 \leq i, j \leq 3.
\end{array}
$$
Note that $r_{11} = r_{22} = 10$ and $s_{11} = s_{22} = -10$.
Solving the linear programming problem, we obtain
$$
X = \begin{pmatrix}-10 & -4 & -1 \\ 1 & -10 & 0 \\ 6 & -1 & -2\end{pmatrix}
\text{ and }
Y = \begin{pmatrix}10 & 10 & 3 \\ 16 & 10 & 5 \\ 9 & 3 & 0\end{pmatrix}\!.
$$
Note that $A \otimes X \otimes B \otimes Y \otimes C = A \otimes B \otimes C$, but
$A \otimes X  \neq A$,
$X \otimes B \neq B$,
$B \otimes Y \neq B$,
$Y \otimes C \neq C$,
$A \otimes X \otimes B  \neq A \otimes B$,
$X \otimes B \otimes Y  \neq  B$, and
$B \otimes Y \otimes C \neq B \otimes C$.
\end{example}

Lemma~\ref{lemma3} and Lemma~\ref{lemma4} can be generalized to an arbitrary number of matrices. For example, consider how Lemma~\ref{lemma4} can be generalized.

\begin{lemma}\label{lemma5}
	Let $A_i\in \mathrm{Mat}(k, \mathbb{R})$, $1 \leq i \leq n + 1$. A tuple of matrices $(X_1, X_2, \ldots, X_n)$, $X_i \in \mathrm{Mat}(k, \mathbb{R})$, is a solution to 
	$$A_1 \otimes X_1 \otimes A_2 \otimes X_2 \ldots \otimes X_n \otimes A_{n + 1} = A_1 \otimes A_2 \otimes \ldots \otimes A_{n+1}$$ over $\mathcal{M}_{\min, +}^k$ if and only if $x_{1, p_1, q_1} \otimes \ldots \otimes x_{n, p_n, q_n} \geq x_{p_1, q_1, \ldots, p_n, q_n}^{*}(A_1, \ldots, A_n)$ and
	$$\bigcup\limits_{(p_1,q_1,\ldots, p_n,q_n) \; : \; x_{1, p_1, q_1} \otimes \ldots \otimes x_{n, p_n, q_n} = x_{p_1, q_1, \ldots, p_n, q_n}^{*}(A_1, \ldots, A_n)} M_{p_1, q_1, \ldots, p_n, q_n}(A_1, \ldots, A_n) = \{1, \ldots, k\} \times \{1, \ldots, k\},$$
	where 
	$$x_{p_1, q_1,\ldots, p_n, q_n}^{*}(A_1, \ldots, A_n) = 
	\max \{
	d_{i,j} \otimes
	a_{1, i, p_1}^{\otimes -1} \otimes a_{2, q_1, p_2}^{\otimes -1} \otimes \ldots \otimes a_{n+1, q_n, j}^{\otimes -1}
	\}_{i,j},$$
	$$d_{i,j} = 
	\bigoplus_{u_1 = 1}^k \bigoplus_{u_2 = 1}^k \ldots \bigoplus_{u_n = 1}^k
	a_{1, i, u_1} \otimes a_{2, u_1, u_2} \otimes \ldots \otimes a_{n+1, u_n, j},$$
	and
	$$M_{p_1, q_1, \ldots p_n, q_n}(A_1, \ldots, A_n) = \{(i, j) \; : \; x_{p_1, q_1, \ldots, p_n, q_n}^*(A_1, \ldots, A_n) = 
	d_{i,j} \otimes
	a_{1, i, p_1}^{\otimes -1} \otimes a_{2, q_1, p_2}^{\otimes -1} \otimes \ldots \otimes a_{n+1, q_n, j}^{\otimes -1}
	\}.$$
\end{lemma}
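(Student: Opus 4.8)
The plan is to mirror the arguments used for Lemma~\ref{lemma3} and Lemma~\ref{lemma4}: reduce the matrix equation to a single one-sided min-plus linear system after a linearizing substitution, and then invoke Theorem 3.1.1 from~\cite{Butkovic2010}.

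First I would expand the equation $A_1 \otimes X_1 \otimes A_2 \otimes \cdots \otimes X_n \otimes A_{n+1} = A_1 \otimes A_2 \otimes \cdots \otimes A_{n+1}$ entrywise. By the definition of the matrix product, the $(i,j)$-entry of the left-hand side is
\[
\bigoplus_{p_1, q_1, \ldots, p_n, q_n} a_{1, i, p_1} \otimes x_{1, p_1, q_1} \otimes a_{2, q_1, p_2} \otimes x_{2, p_2, q_2} \otimes \cdots \otimes a_{n, q_{n-1}, p_n} \otimes x_{n, p_n, q_n} \otimes a_{n+1, q_n, j},
\]
while the $(i,j)$-entry of the right-hand side is exactly $d_{i,j}$ as defined in the statement. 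Next I would introduce the auxiliary variables $t_{p_1, q_1, \ldots, p_n, q_n} = x_{1, p_1, q_1} \otimes \cdots \otimes x_{n, p_n, q_n}$, which makes every equation linear in the $t$'s, and then multiply the $(i,j)$-equation by $d_{i,j}^{\otimes -1}$ to normalize its right-hand side to $0$. This yields a homogeneous one-sided min-plus linear system in which the coefficient of $t_{p_1, q_1, \ldots, p_n, q_n}$ in the $(i,j)$-equation is $a_{1,i,p_1} \otimes a_{2,q_1,p_2} \otimes \cdots \otimes a_{n+1,q_n,j} \otimes d_{i,j}^{\otimes -1}$.

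Then I would apply Theorem 3.1.1 from~\cite{Butkovic2010}. Its principal solution gives the lower bound $t_{p_1,q_1,\ldots,p_n,q_n} \geq x^*_{p_1,q_1,\ldots,p_n,q_n}(A_1,\ldots,A_n)$, with $x^*$ exactly as written (the maximum over $(i,j)$ of $d_{i,j}$ multiplied by the inverses of the relevant entries of the $A_l$), and the solvability/attainment part of the theorem translates precisely into the stated covering requirement: for every $(i,j)$ there must be a tuple $(p_1,q_1,\ldots,p_n,q_n)$ at which the lower bound is met and which covers $(i,j)$, i.e.\ $(i,j) \in M_{p_1,q_1,\ldots,p_n,q_n}(A_1,\ldots,A_n)$. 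Finally I would substitute $t_{p_1,q_1,\ldots,p_n,q_n} = x_{1,p_1,q_1} \otimes \cdots \otimes x_{n,p_n,q_n}$ back in; since $(X_1,\ldots,X_n)$ solves the matrix equation if and only if these products solve the linear system, the characterization transfers verbatim.

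The routine part is the bookkeeping of the $2n$ running indices and staying consistent with the min-plus conventions (all inequalities point the opposite way to the max-plus case). The one point genuinely worth stating with care is that the substitution $(X_1,\ldots,X_n) \mapsto (t_{p_1,q_1,\ldots,p_n,q_n})$ is many-to-one: not every assignment of the $t$-variables arises as a product of matrix entries. As in Lemma~\ref{lemma4}, one therefore phrases the conclusion directly in terms of the products $x_{1,p_1,q_1} \otimes \cdots \otimes x_{n,p_n,q_n}$ rather than in terms of abstract free variables. Once this is done, surjectivity of the substitution is not needed, because the equivalence ``$(X_1,\ldots,X_n)$ solves the matrix equation if and only if the associated products solve the linear system'' is immediate in both directions by substitution.
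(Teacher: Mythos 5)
Your proposal is correct and matches the paper's own argument essentially step for step: expand the equation entrywise, linearize via the substitution $t_{p_1,q_1,\ldots,p_n,q_n} = x_{1,p_1,q_1} \otimes \cdots \otimes x_{n,p_n,q_n}$, normalize each $(i,j)$-equation by $d_{i,j}^{\otimes -1}$, and apply Theorem 3.1.1 of~\cite{Butkovic2010}, reading off the principal solution and the covering condition. Your closing remark about phrasing the conclusion in terms of the products rather than free $t$-variables is a sensible clarification of why the characterization transfers, but it is not a departure from the paper's route.
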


\begin{proof}
	Note that the equation can be rewritten in the following form:
	\begin{align*}
			\bigoplus_{p_1 = 1}^k \bigoplus_{q_1 = 1}^k
		\ldots \bigoplus_{p_n = 1}^k \bigoplus_{q_n = 1}^k
		a_{1,i,p_1} \otimes x_{1,p_1,q_1} \otimes \ldots \otimes x_{n,p_n,q_n} \otimes a_{n+1,q_n,j} = \\ 
		= \bigoplus_{u_1 = 1}^k \bigoplus_{u_2 = 1}^k \ldots \bigoplus_{u_n = 1}^k
		a_{1,i,u_1} \otimes a_{2,u_1,u_2} \otimes \ldots \otimes a_{n+1,u_n,j}.
	\end{align*}
	Introduce new variables $t_{p_1,q_1,\ldots,p_n,q_n} = x_{1,p_1,q_1} \otimes \ldots \otimes x_{n,p_n,q_n}$. The system can be rewritten as
	$$ 
	\bigoplus_{p_1 = 1}^k \bigoplus_{q_1 = 1}^k
	\ldots \bigoplus_{p_n = 1}^k \bigoplus_{q_n = 1}^k
	(a_{1,i,p_1} \otimes a_{2,q_1,p_2} \otimes \ldots \otimes a_{n+1,q_n,j} \otimes d_{i,j}^{\otimes -1}) \otimes t_{p_1,q_1,\ldots,p_n,q_n}
	= 0.
	$$
	Applying Theorem 3.1.1 from \cite{Butkovic2010} to this system of equations completes the proof.
\end{proof}

The following definition is an attempt to define marginal sets in semirings. Again, $\square_1, \ldots, \square_n, \bigcirc_1, \ldots, \bigcirc_k$ are variables here. We use these symbols instead of letters to highlight their special role.

\begin{definition}
Let $\mathcal{R} = \left<R, {\cdot}, {+}, e, o\right>$ be a semiring, 
let $w(x_1, \ldots, x_m, \square_1, \ldots, \square_n, \bigcirc_1, \ldots, 
\bigcirc_k)$ be a semiring expression such that each $\square_i$, $1 \leq i \leq n$, is a factor and each $\bigcirc_j$, $1 \leq j \leq k$, is a summand, and let $(a_1, \ldots, a_m) \in R^m$.
A tuple $(c_1, \ldots, c_n, d_1, \ldots, d_k) \in R^{n+k}$ is called $w(a_1, \ldots, a_m, \square_1, \ldots, \square_n, \bigcirc_1, \ldots, \bigcirc_k)$-\textit{marginal}  if
$$
w(a_1, \ldots, a_m, c_1, \ldots, c_n, d_1, \ldots, d_k) = 
w(a_1, \ldots, a_m, \underbrace{e, \ldots, e}_n, \underbrace{o, \ldots, o}_k).
$$  
Any subset $C \subseteq R^{n+k}$ of $w(a_1, \ldots, a_m, \square_1, \ldots, \square_n, \bigcirc_1, \ldots, \bigcirc_k)$-\textit{marginal} tuples is called a
 $w(a_1, \ldots, a_m, \allowbreak \square_1, \ldots, \square_n, \bigcirc_1, \ldots, \bigcirc_k)$-\textit{marginal} set.
\end{definition}

\begin{example}
Let $$
A =
\begin{pmatrix}
3 & 7 & 4 \\
5 & 12 & 7 \\
6 & 5 & 11
\end{pmatrix}\!.
$$
$$
C = \left\{\!\!
\begin{pmatrix}
5 & 12 & 10 \\
25 & 12 & 8 \\
59 &  23 & 12
\end{pmatrix}\!\!
\right\}
$$ is a ($A \oplus \bigcirc)$-marginal set.
\end{example}

The following lemma helps construct ($A \oplus \bigcirc)$-marginal sets.

\begin{lemma}
Let $A \in Mat(k, \mathbb{R})$ A matrix $X \in \mathrm{Mat}(k, \mathbb{R})$ is a solution to $A \oplus X = A$ over $\mathcal{M}_{{\min}, {+}}^k$ if and only if $X \geq A$.
\end{lemma}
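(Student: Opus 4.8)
The plan is to work entry-wise, using the fact that the matrix operation $\oplus$ is defined coordinatewise in the min-plus semiring of matrices. Recall that in $\mathcal{R}_{\min,+}$ the sum is $a \oplus b = \min(a,b)$, so $(A \oplus X)_{ij} = \min(a_{ij}, x_{ij})$. Hence the matrix equation $A \oplus X = A$ is equivalent to the system of scalar equations $\min(a_{ij}, x_{ij}) = a_{ij}$ for all $1 \le i, j \le k$.

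First I would observe that for fixed $i$ and $j$, the scalar equation $\min(a_{ij}, x_{ij}) = a_{ij}$ holds if and only if $a_{ij} \le x_{ij}$: if $x_{ij} \ge a_{ij}$ then the minimum is $a_{ij}$, and conversely if the minimum equals $a_{ij}$ then $a_{ij}$ is a lower bound for both arguments, in particular $a_{ij} \le x_{ij}$. This is an elementary property of $\min$ on $\mathbb{R}$ (and extends without change to $\mathbb{R} \cup \{+\infty\}$, though here $A, X$ have real entries). Collecting these equivalences over all pairs $(i,j)$ gives that $A \oplus X = A$ holds if and only if $a_{ij} \le x_{ij}$ for every $i, j$, which is precisely the statement $X \ge A$ in the entrywise partial order on $\mathrm{Mat}(k,\mathbb{R})$.

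There is essentially no obstacle here: the only thing to be careful about is to state clearly the direction of the order relation (the min-plus semiring is ``reversed'' relative to the usual intuition, so being a solution means being \emph{above} $A$, not below), and to note that the argument is the same for every coordinate so that the matrix-level claim follows at once from the scalar claim. A proof of two or three lines suffices: expand the definition of matrix $\oplus$, reduce to scalars, apply the min identity, and recombine.

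Finally, I would remark for the reader that this immediately yields an algorithm for constructing $(A \oplus \bigcirc)$-marginal sets entirely analogous to (indeed simpler than) Algorithms~\ref{AlgoAXeqA2} and \ref{AlgoXAeqA2}: one simply samples matrices $X$ with $X \ge A$ entrywise, for instance by setting $x_{ij} = a_{ij} + r_{ij}$ for randomly chosen nonnegative integers $r_{ij}$. No maximality constraint or covering condition is needed because, unlike the multiplicative equations $A \otimes X = A$, the additive equation imposes no interaction between distinct coordinates.
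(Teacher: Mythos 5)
Your proof is correct and follows essentially the same route as the paper: reduce $A \oplus X = A$ to the entrywise scalar equations $\min(a_{ij}, x_{ij}) = a_{ij}$ and observe each is equivalent to $x_{ij} \geq a_{ij}$. You state both directions explicitly, whereas the paper's short proof only spells out one, but the argument is the same.
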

\begin{proof}
For any $i$ and $j$ we have $a_{ij} \otimes x_{ij} = a_{ij}$.
This means that $\min(a_{ij}, x_{ij}) = a_{ij}$. Therefore, 
$x_{ij} \geq a_{ij}$.
\end{proof}

\section{Key exchange protocols with marginal sets}\label{SecProto}

In this section, we consider four key-exchange schemes based on semigroups with marginal sets. These schemes use the following marginal sets:
$M \perp a \cdot \square$, $M \perp \square \cdot a$,
$M \perp \square_1 \cdot a \cdot \square_2$, and
$M \perp a \cdot \square_1 \cdot b \cdot \square_2 \cdot c$.
These schemes generalize the Sidelnikov--Cherepnev--Yaschenko protocol.
We use tropical matrix algebras to demonstrate how these schemes work. We do not state that these schemes are secure, but we use them as illustrations of how marginal sets can change protocols.

\begin{protocol}
Alice and Bob agree on a public noncommutative semigroup $\mathcal{S} = \left<S, \cdot, e\right>$, two public pairwise commuting subsets $H \subseteq S$ and $R \subseteq S$, and a public element $w \in S$.
\begin{enumerate}
\item Alice chooses two private elements $p_1 \in H$ and $q_1 
\in R$, constructs public marginal sets $M_1 \subseteq S$ and $N_1 \subseteq S$ such that $M_1 \perp p_1 \cdot \square$ and $N_1 \perp \square \cdot q_1$, and sends the sets $M_1$ and $N_1$ to Bob.

\item Bob chooses two private elements $p_2 \in H$ and $q_2 
\in R$, constructs public marginal sets $M_2 \subseteq S$ and $N_2 \subseteq S$ such that $M_2 \perp p_2 \cdot \square$ and $N_2 \perp \square \cdot q_2$, and sends the sets $M_2$ and $N_2$ to Alice.

\item  Alice chooses private elements $c_2 \in M_2$ and $d_2 \in N_2$,  
computes $u = c_2 \cdot p_1 \cdot w \cdot q_1 \cdot d_2$, and sends $u$ to Bob.

\item  Bob chooses private elements $c_1 \in M_1$ and $d_1 \in N_1$,  
computes $v = c_1 \cdot p_2 \cdot w \cdot q_2 \cdot d_1$, and sends $v$ to Alice.

\item Alice computes her shared secret key $k_A = p_1 \cdot v \cdot q_1$.

\item Bob computes the shared secret key $k_B = p_2 \cdot u \cdot q_2$.
\end{enumerate}
\end{protocol}

They share the same key because 
$$
k_A = p_1 \cdot v \cdot q_1 = 
p_1 \cdot (c_1 \cdot p_2 \cdot w \cdot q_2 \cdot d_1) \cdot q_1 = 
(p_1 \cdot c_1) \cdot p_2 \cdot w \cdot q_2 \cdot (d_1 \cdot q_1) = 
p_1 \cdot p_2 \cdot w \cdot q_2 \cdot q_1
$$
and
\begin{multline*}
k_B = p_2 \cdot u \cdot q_2 = 
p_2 \cdot (c_2 \cdot p_1 \cdot w \cdot q_1 \cdot d_2) \cdot q_2 = 
(p_2 \cdot c_2) \cdot p_1 \cdot w \cdot q_1 \cdot (d_2 \cdot q_2) = 
p_2 \cdot p_1 \cdot w \cdot q_1 \cdot q_2 = 
(p_2 \cdot p_1) \cdot w \cdot (q_1 \cdot q_2) = \\
p_1 \cdot p_2 \cdot w \cdot q_2 \cdot q_1. 
\end{multline*}

\begin{example}
Let us demonstrate how this general scheme can be applied to a tropical key-exchange protocol. Various options of pairwise commuting subsets described in Section~\ref{SecPrelim} can be used. In this example, let us consider the protocol proposed in~\cite{GrigorievShpilrain2014}, which uses polynomial of matrices.
Note that the original protocol and some of its modifications are well analyzed in~\cite{KotovUshakov2018, AlhussainiSergeev2024,alhussaini2024security,AlhussainiCollettSergeev2023}.

Alice and Bob agree on the semigroup of tropical matrices of size $3 \times 3$ over the min-plus algebra $\mathcal{M}_{\min, {+}}^3$ and three 
matrices
\[
A = \begin{pmatrix}
54 & 15 & 33 \\
59 & 87 & 53 \\
9 & 63 & 80
\end{pmatrix}\!,
B = \begin{pmatrix}
50 & 11 & 14 \\
16 & 29 & 33 \\
27 & 86 & 96
\end{pmatrix}\!, \text{ and } W = 
\begin{pmatrix}
54 & -67 & 35 \\
23 & -7 & -84 \\
9 & 97 & 33
\end{pmatrix}\!.\]

\begin{enumerate}
\item 
Alice randomly chooses two private elements
\[
P_1 = -69 \oplus -97 \otimes A \oplus 60 \otimes A^{\otimes2} = 
\begin{pmatrix}
-69 & -82 & -64 \\
-38 & -69 & -44 \\
-88 & -34 & -69
\end{pmatrix}
\]
and
\[
Q_1 = 8 \oplus -93 \otimes A \oplus 69 \otimes A^{\otimes2} = 
\begin{pmatrix}
-43 & -82 & -79 \\
-77 & -64 & -60 \\
-66 & -7 & 3
\end{pmatrix}\!.
\]
She constructs the following public marginal sets using Algorithms~\ref{AlgoAXeqA2} and~\ref{AlgoXAeqA2}:
\[
M_1 =
\left\{\!
\begin{pmatrix}
0 & 63 & 29 \\
64 & 0 & 133 \\
131 & 66 & 0
\end{pmatrix}\!,
\begin{pmatrix}
0 & 103 & 134 \\
33 & 0 & 134 \\
27 & 46 & 0
\end{pmatrix}\!,
\begin{pmatrix}
0 & 142 & 99 \\
74 & 0 & 43 \\
112 & 142 & 0
\end{pmatrix}\!
\right\}\]
and
\[
N_1 =
\left\{\!
\begin{pmatrix}
0 & 54 & 149 \\
38 & 0 & -6 \\
97 & 96 & 0
\end{pmatrix}\!,
\begin{pmatrix}
0 & 44 & 97 \\
38 & 0 & 33 \\
99 & 144 & 0
\end{pmatrix}\!,
\begin{pmatrix}
0 & 71 & 91 \\
21 & 0 & 77 \\
141 & 147 & 0
\end{pmatrix}\!
\right\}\!.
\]
She sends $M_1$ and $N_1$ to Bob.
\item
Bob randomly chooses two private elements
\[
P_2 = -11 \oplus 2 \otimes A \oplus -88 \otimes A^{\otimes2} = 
\begin{pmatrix}
-46 & -19 & -20 \\
-26 & -14 & 4 \\
-25 & -64 & -46
\end{pmatrix}
\]
and
\[
Q_2 = -33 \oplus 37 \otimes A \oplus -41 \otimes A^{\otimes2} = 
\begin{pmatrix}
-33 & -1 & 3 \\
4 & -33 & -11 \\
36 & -3 & -33
\end{pmatrix}\!.
\]
He constructs the following public marginal sets using Algorithms~\ref{AlgoAXeqA2} and~\ref{AlgoXAeqA2}:
\[
M_2 =
\left\{\!
\begin{pmatrix}
0 & 70 & 95 \\
92 & 0 & 29 \\
54 & 113 & 0
\end{pmatrix}\!,
\begin{pmatrix}
0 & 52 & 105 \\
110 & 0 & 30 \\
117 & 133 & 0
\end{pmatrix}\!,
\begin{pmatrix}
0 & 70 & 61 \\
119 & 0 & 20 \\
84 & 84 & 0
\end{pmatrix}\!
\right\}
\]
and
\[
N_2 =
\left\{\!
\begin{pmatrix}
0 & 85 & 60 \\
60 & 0 & 102 \\
115 & 116 & 0
\end{pmatrix}\!,
\begin{pmatrix}
0 & 87 & 48 \\
100 & 0 & 129 \\
100 & 121 & 0
\end{pmatrix}\!,
\begin{pmatrix}
0 & 150 & 67 \\
42 & 0 & 115 \\
107 & 142 & 0
\end{pmatrix}\!
\right\}\!.
\]
Bob sends $M_2$ and $N_2$ to Alice.
\item 
Alice randomly chooses $C_2 \in M_2$ and $D_2 \in N_2$:
\[
C_2 =
\begin{pmatrix}
0 & 70 & 61 \\
119 & 0 & 20 \\
84 & 84 & 0
\end{pmatrix}
\quad
\text{and}
\quad
D_2 =
\begin{pmatrix}
0 & 150 & 67 \\
42 & 0 & 115 \\
107 & 142 & 0
\end{pmatrix}\!.
\]
She computes
\[
U = C_2 \otimes P_1 \otimes W \otimes Q_1 \otimes D_2 =
\begin{pmatrix}
-235 & -247 & -244 \\
-234 & -234 & -231 \\
-254 & -241 & -237
\end{pmatrix}
\]
and sends it to Bob.

\item
Bob randomly chooses $C_1 \in M_1$ and $C_1 \in N_1$:
\[
C_1 =
\begin{pmatrix}
0 & 103 & 134 \\
33 & 0 & 134 \\
27 & 46 & 0
\end{pmatrix}
\quad
\text{and}
\quad
D_1 =
\begin{pmatrix}
0 & 44 & 97 \\
38 & 0 & 33 \\
99 & 144 & 0
\end{pmatrix}\!.
\]
He computes
\[
V = C_1 \otimes P_2 \otimes W \otimes Q_2 \otimes D_1 = 
\begin{pmatrix}
-135 & -168 & -167 \\
-130 & -148 & -147 \\
-180 & -148 & -163
\end{pmatrix}
\]
and sends it to Alice.

\item
Alice computes the secret key
\[
K_A = P_1 \otimes V \otimes Q_1 = 
\begin{pmatrix}
-187 & -258 & -243 \\
-186 & -226 & -275 \\
-182 & -226 & -244
\end{pmatrix}\!.
\]

\item
Bob computes the secret key
\[
K_B = P_2 \otimes U \otimes Q_2 = 
\begin{pmatrix}
-314 & -326 & -323 \\
-294 & -306 & -303 \\
-333 & -331 & -328
\end{pmatrix}\!.
\]
\end{enumerate}
\end{example}

\begin{remark}
For an eavesdropper to break the protocol means being able to compute the key based on the values of $A$, $B$, $W$, $U$, $V$, $M_1$, $M_2$, $N_1$, and $N_2$. 
Note that to find the key, it is not enough to find matrices $P_1'$, $Q_1'$, $C_2'$, and $D_2'$ such that
\begin{align*}
C_2' &\in M_2,\\
D_2' &\in N_2,\\
P_1' \otimes A &= A \otimes P_1', \\
Q_1' \otimes B &= B \otimes Q_1', \\
C_2' \otimes P_1' \otimes W \otimes Q_1' \otimes D_2' &= U
\end{align*}
because $P'_1 \otimes C_1$ must be equal to $P'_1$ and $D_1 \otimes Q_1'$ must be equal to $Q_1'$ for the unknown $C_1$ and $D_1$.
\end{remark}

\begin{protocol}
Alice and Bob agree on a public noncommutative semigroup $\mathcal{S} = \left<S, \cdot, e\right>$, two public pairwise commuting subsets $H \subseteq S$ and $R \subseteq S$, and a public element $w \in S$.

\begin{enumerate}
\item Alice chooses two private elements $p_1 \in H$ and $q_1 \in R$, constructs a public marginal set $M_1 \subseteq S^2$ such that $M_1 \perp p_1\cdot\square \cdot w \cdot \square \cdot q_1$, and sends $M_1$ to Bob.

\item Bob chooses two private elements $p_2 \in H$ and $q_2 \in R$, constructs a public marginal set $M_2 \subseteq S^2$ such that $M_2 \perp p_2 \cdot \square \cdot w \cdot \square \cdot q_2$, and sends $M_2$ to Alice.

\item
Alice chooses a pair $(c_2, d_2) \in M_2$, computes $u = p_1 \cdot c_2 \cdot w \cdot d_2 \cdot q_1$, and sends it to Bob.

\item Bob chooses a pair $(c_1, d_1)$ in $M_1$, computes $v = p_2 \cdot c_1 \cdot w \cdot d_1 \cdot q_2$ and sends it.

\item Alice computes her shared key $k_A = p_1 \cdot v \cdot q_1$.

\item Bob computes his shared key $k_B = p_2 \cdot u \cdot q_2$.
\end{enumerate}
\end{protocol}

They share the same key because
\begin{multline*}
k_A = p_1 \cdot v \cdot q_1 = 
p_1 \cdot (p_2 \cdot c_1 \cdot w \cdot d_1 \cdot q_2) \cdot q_1 = 
(p_1 \cdot p_2) \cdot c_1 \cdot w \cdot d_1 \cdot (q_2 \cdot q_1) = 
(p_2 \cdot p_1) \cdot c_1 \cdot w \cdot d_1 \cdot (q_1 \cdot q_2) =  \\
p_2 \cdot (p_1 \cdot c_1 \cdot w \cdot d_1 \cdot q_1) \cdot q_2 = 
p_2 \cdot (p_1 \cdot w \cdot q_1) \cdot q_2     
\end{multline*}

and
\begin{multline*}
k_B = p_2 \cdot u \cdot q_2  =
p_2 \cdot (p_1 \cdot c_2 \cdot w \cdot d_2 \cdot q_1) \cdot q_2 = 
(p_2 \cdot p_1) \cdot c_2 \cdot w \cdot d_2 \cdot (q_1 \cdot q_2) 
= 
(p_1 \cdot p_2) \cdot c_2 \cdot w \cdot d_2 \cdot (q_2 \cdot q_1) 
=  \\
p_1 \cdot (p_2 \cdot c_2 \cdot w \cdot d_2 \cdot q_2) \cdot q_1 
=
p_1 \cdot (p_2 \cdot w \cdot q_2) \cdot q_1 =
(p_1 \cdot p_2) \cdot w \cdot (q_2 \cdot q_1) =
p_2 \cdot p_1 \cdot w \cdot q_1 \cdot q_2. 
\end{multline*}

\begin{example}
In this example, we also modify the protocol proposed by Grigoriev and Shpilrain in~\cite{GrigorievShpilrain2014}. 

Alice and Bob agree on the semiring $\mathcal{M}^4_{{\min}, {+}}$ and three matrices
\[
A = \begin{pmatrix}
34 & 30 & 9 & 36 \\
91 & 83 & 99 & 57 \\
42 & 72 & 80 & 42 \\
22 & 62 & 67 & 16
\end{pmatrix}\!,
B = \begin{pmatrix}
20 & 32 & 55 & 1 \\
39 & 66 & 67 & 5 \\
95 & 64 & 36 & 95 \\
78 & 91 & 1 & 30
\end{pmatrix}\!,
\text{ and } 
W = \begin{pmatrix}
-2 & -8 & -10 & 4 \\
6 & 4 & 2 & 7 \\
-8 & -2 & -5 & -7 \\
-8 & 10 & 0 & 5
\end{pmatrix}\!.
\]
\begin{enumerate}
\item
Alice randomly selects $p_1(x) =  45 \oplus 98\otimes x \oplus 11 \otimes x^{{\otimes}2}$ and 
$q_1(x) =  77 \oplus 26 \otimes x \oplus 4 \otimes x^{{\otimes 2}}$,
computes 
\[
P_1 = p_1(A) = \begin{pmatrix}
45 & 75 & 54 & 62 \\
90 & 45 & 111 & 84 \\
75 & 83 & 45 & 69 \\
49 & 63 & 42 & 43
\end{pmatrix} \text{ and }
Q_1 = q_1(B) = \begin{pmatrix}
44 & 56 & 6 & 25 \\
63 & 75 & 10 & 31 \\
107 & 90 & 62 & 73 \\
100 & 69 & 27 & 56
\end{pmatrix}\!,
\]
generates the marginal set
\begin{multline*}
M_1 = \left\{\!
\left(\!
\begin{pmatrix}
-25 & 63 & 12 & 37 \\
65 & -25 & 11 & -11 \\
-29 & -39 & -25 & -25 \\
-12 & -40 & 79 & -26
\end{pmatrix}\!, 
\begin{pmatrix}
25 & 21 & 50 & 67 \\
31 & 25 & -1 & 18 \\
33 & 89 & 27 & 20 \\
96 & 20 & 78 & 11
\end{pmatrix}\!
\right)\!\right.,\\
\left.\left(\!
\begin{pmatrix}
-6 & -14 & 96 & 3 \\
16 & -6 & 8 & 69 \\
22 & 17 & -6 & 65 \\
-11 & -3 & -7 & -7
\end{pmatrix}\!,
\begin{pmatrix}
6 & 2 & 90 & -7 \\
12 & 6 & -22 & -1 \\
14 & 8 & -20 & 52 \\
5 & 64 & 64 & 33
\end{pmatrix}\!
\right)\!
\right\}\!,
\end{multline*}
using Algorithm~\ref{AlgoAXBYCeqABC}, and sends this set to Bob.

\item
Bob selects $p_2(x) = 78 \oplus 68 \otimes x \oplus 80 \otimes x^{{\otimes}2}$
and 
$q_2(x) = 45 \oplus 24 \otimes x\oplus 17 \otimes x^{{\otimes}2}$, 
computes
\[
P_2 = p_2(A) = \begin{pmatrix}
78 & 98 & 77 & 104 \\
159 & 78 & 167 & 125 \\
110 & 140 & 78 & 110 \\
90 & 130 & 111 & 78
\end{pmatrix} \text{ and }
Q_2 = q_2(B) = \begin{pmatrix}
44 & 56 & 19 & 25 \\
63 & 45 & 23 & 29 \\
119 & 88 & 45 & 86 \\
102 & 82 & 25 & 45
\end{pmatrix}\!,
\]
generates the marginal set 
\begin{multline*}
M_2 = 
\left\{\!\left(\!
\begin{pmatrix}
59 & 69 & 58 & 58 \\
71 & 59 & 73 & 73 \\
65 & 84 & 59 & 59 \\
97 & 58 & 92 & 59
\end{pmatrix}\!, 
\begin{pmatrix}
-59 & 39 & -59 & 1 \\
-57 & -59 & 11 & -58 \\
9 & 5 & 5 & -61 \\
92 & -39 & -81 & 98
\end{pmatrix}\!\right)\right.\!,  \\
\left.
\left(\!
\begin{pmatrix}
-70 & -81 & 49 & -7 \\
3 & -70 & 45 & -42 \\
43 & -42 & -70 & -62 \\
-59 & 47 & 97 & -70
\end{pmatrix}\!, 
\begin{pmatrix}
70 & 81 & 44 & 64 \\
72 & 70 & 46 & 66 \\
74 & 75 & 48 & 68 \\
69 & 75 & 85 & 63
\end{pmatrix}\!
\right)\!\right\},
\end{multline*}
using Algorithm~\ref{AlgoAXBYCeqABC}, and sends this set to Alice.

\item
Alice randomly selects the tuple
\[
(C_2, D_2) = \left(\!\begin{pmatrix}
-70 & -81 & 49 & -7 \\
3 & -70 & 45 & -42 \\
43 & -42 & -70 & -62 \\
-59 & 47 & 97 & -70
\end{pmatrix}\!,\begin{pmatrix}
70 & 81 & 44 & 64 \\
72 & 70 & 46 & 66 \\
74 & 75 & 48 & 68 \\
69 & 75 & 85 & 63
\end{pmatrix}\!\right)\!,
\]
computes \[
U = P_1 \otimes C_2 \otimes W \otimes D_2 \otimes Q_1 = \begin{pmatrix}
83 & 95 & 45 & 64 \\
95 & 107 & 57 & 76 \\
81 & 93 & 43 & 62 \\
78 & 90 & 40 & 59
\end{pmatrix}\!,
\]
and sends it to Bob.

\item 
Bob selects the tuple
\[
(C_1, D_1) = \left(\!\begin{pmatrix}
-25 & 63 & 12 & 37 \\
65 & -25 & 11 & -11 \\
-29 & -39 & -25 & -25 \\
-12 & -40 & 79 & -26
\end{pmatrix}\!,\begin{pmatrix}
25 & 21 & 50 & 67 \\
31 & 25 & -1 & 18 \\
33 & 89 & 27 & 20 \\
96 & 20 & 78 & 11
\end{pmatrix}\!\right)\!,
\]
computes \[
V = P_2 \otimes C_1 \otimes W \otimes D_1 \otimes Q_2 = \begin{pmatrix}
113 & 110 & 81 & 94 \\
128 & 125 & 96 & 109 \\
114 & 111 & 82 & 95 \\
113 & 110 & 81 & 94
\end{pmatrix}\!,
\]
and sends it to Alice.

\item
Alice computes her shared key:
\[
K_A = P_1 \otimes V \otimes Q_1 = \begin{pmatrix}
202 & 208 & 164 & 183 \\
217 & 223 & 179 & 198 \\
203 & 209 & 165 & 184 \\
200 & 206 & 162 & 181
\end{pmatrix}\!.
\]
\item 
Bob computes his shared key:
\[
K_B = P_2 \otimes U \otimes Q_2 = \begin{pmatrix}
202 & 208 & 164 & 183 \\
217 & 223 & 179 & 198 \\
203 & 209 & 165 & 184 \\
200 & 206 & 162 & 181
\end{pmatrix}\!.
\]
\end{enumerate}
\end{example}

\begin{remark}
For an eavesdropper to break the protocol means to be able to compute the key based on the values of $A$, $B$, $W$, $U$, $V$, $M_1$ and $M_2$. 
Note that to find the key, it is not enough to find matrices $P_1'$, $Q_1'$, $C_2'$, and $D_2'$ such that
\begin{align*}
(C_2', D_2') &\in M_2,\\
P_1' \otimes A &= A \otimes P_1', \\
Q_1' \otimes B &= B \otimes Q_1', \\
P_1' \otimes C_2' \otimes W \otimes D_2' \otimes Q_1' &= U
\end{align*}
because $P'_1 \otimes C_1 \otimes W \otimes D_2 \otimes Q_1'$ must be equal to $P'_1 \otimes W  \otimes Q_1'$ for the unknown $C_1$ and $D_2$.
\end{remark}

\begin{protocol}
Alice and Bob agree on a public noncommutative semigroup $\mathcal{S} = \left<S, \cdot, e\right>$, four public pairwise commuting subssets
$H_1 \subseteq S$, $R_1 \subseteq S$, $H_2 \subseteq S$, and $R_2 \subseteq S$.
and two public elements $w_1, w_2 \in S$.
\begin{enumerate}
\item Alice selects private elements $p_{11} \in H_1$, $p_{12} \in H_2$, $q_{11} \in R_1$, and $q_{12} \in R_2$, constructs public marginal sets $M_{11} \subseteq S$, $M_{12} \subseteq S^2$, and $M_{13} \subseteq S$ such that
$M_{11} \perp p_{11} \cdot \square$, $M_{12} \perp \square_1 \cdot  q_{11} \cdot p_{12} \cdot \square_2$, and $M_{13} \perp \square \cdot q_{12}$, and sends $M_{11}$, $M_{12}$, and $M_{13}$ to Bob.

\item Bob selects private elements $p_{21} \in H_1$, $p_{22} \in H_2$, $q_{21} \in R_1$, and $q_{22} \in R_2$, constructs public marginal sets $M_{21} \subseteq S$, $M_{22} \subseteq S^2$, and $M_{23} \subseteq S$ such that
$M_{21} \perp p_{21} \cdot \square$, $M_{22} \perp \square_1 \cdot  q_{21} \cdot p_{22} \cdot \square_2$, and $M_{23} \perp \square \cdot q_{22}$, and sends $M_{21}$, $M_{22}$, and $M_{23}$ to Alice.

\item Alice selects private elements $c_{21} \in M_{21}$, $(d_{21}, c_{22}) \in M_{22}$, and $d_{22} \in M_{23}$, computes 
$$u = (c_{21} \cdot p_{11} \cdot w_1 \cdot q_{11} \cdot d_{21}, c_{22} \cdot p_{12} \cdot w_2 \cdot q_{12} \cdot d_{22}),$$ and sends $u$ to Bob.

\item Bob selects private elements $c_{11} \in M_{11}$, $(d_{11}, c_{12}) \in M_{12}$, and $d_{12} \in M_{13}$, computes 
$$v = (c_{11} \cdot p_{21} \cdot w_1 \cdot q_{21} \cdot d_{11}, c_{12} \cdot p_{22} \cdot w_2 \cdot q_{22} \cdot d_{12}),$$ and sends $v$ to Alice.

\item Alice computes her shared secret key
$k_A = p_{11} \cdot v_1 \cdot q_{11} \cdot p_{12} \cdot v_2 \cdot q_{12}
$.

\item Bob computes his shared secret key
$k_B = p_{21} \cdot u_1 \cdot q_{21} \cdot p_{22} \cdot u_2 \cdot q_{22}
$.
\end{enumerate}
\end{protocol}

They share the same key because 
\begin{multline*}
k_A = 
p_{11} \cdot v_1 \cdot q_{11} \cdot p_{12} \cdot v_2 \cdot q_{12} = 
p_{11} \cdot (c_{11} \cdot p_{21} \cdot w_1 \cdot q_{21} \cdot d_{11}) \cdot q_{11} \cdot p_{12} \cdot (c_{12} \cdot p_{22} \cdot w_2 \cdot q_{22} \cdot d_{12}) \cdot q_{12} =  \\
(p_{11} \cdot c_{11}) \cdot p_{21} \cdot w_1 \cdot q_{21} \cdot (d_{11} \cdot q_{11} \cdot p_{12} \cdot c_{12}) \cdot p_{22} \cdot w_2 \cdot q_{22} \cdot (d_{12} \cdot q_{12}) =
p_{11} \cdot p_{21} \cdot w_1 \cdot q_{21} \cdot q_{11} \cdot p_{12} \cdot p_{22} \cdot w_2 \cdot q_{22} \cdot q_{12}
\end{multline*}
and
\begin{multline*}
k_B = p_{21} \cdot u_1 \cdot q_{21} \cdot p_{22} \cdot u_2 \cdot q_{22} 
= 
p_{21} \cdot (c_{21} \cdot p_{11} \cdot w_1 \cdot q_{11} \cdot d_{21}) \cdot q_{21} \cdot p_{22} \cdot (c_{22} \cdot p_{12} \cdot w_2 \cdot q_{12} \cdot d_{22}) \cdot q_{22} = \\
(p_{21} \cdot c_{21}) \cdot p_{11} \cdot w_1 \cdot q_{11} \cdot (d_{21} \cdot q_{21} \cdot p_{22} \cdot c_{22}) \cdot p_{12} \cdot w_2 \cdot q_{12} \cdot (d_{22} \cdot q_{22}) = 
p_{21} \cdot p_{11} \cdot w_1 \cdot q_{11} \cdot q_{21} \cdot p_{22} \cdot p_{12} \cdot w_2 \cdot q_{12} \cdot q_{22} = \\
(p_{21} \cdot p_{11}) \cdot w_1 \cdot (q_{11} \cdot q_{21}) \cdot (p_{22} \cdot p_{12}) \cdot w_2 \cdot (q_{12} \cdot q_{22}) =
(p_{11} \cdot p_{21}) \cdot w_1 \cdot (q_{21} \cdot q_{11}) \cdot (p_{12} \cdot p_{22}) \cdot w_2 \cdot (q_{22} \cdot q_{12}).
\end{multline*}

\begin{example}
Alice and Bob agree on the semigroup of tropical matrices of size $3 \times 3$ over the min-plus algebra $\mathcal{M}_{\min, {+}}^3$, four sets  of Linde--de la Puente matrices $H_1$, $H_2$, $R_1$, and $R_2$, and
\[
W_1 =
\begin{pmatrix}
80 & 7 & 64 \\
46 & 57 & 15 \\
21 & 36 & 7
\end{pmatrix}\! \quad \text{and} \quad
W_2 =
\begin{pmatrix}
5 & 3 & 68 \\
95 & 89 & 34 \\
99 & 21 & 86
\end{pmatrix}\!.
\]
\begin{enumerate}
\item Alice chooses four private matrices:
\[
P_{11} =
\begin{pmatrix}
-15 & 126 & 166 \\
124 & -15 & 164 \\
153 & 142 & -15
\end{pmatrix}\!,
Q_{11} =
\begin{pmatrix}
-39 & 99 & 153 \\
97 & -39 & 96 \\
101 & 136 & -39
\end{pmatrix}\!,
\]
\[
P_{12} =
\begin{pmatrix}
-3 & 61 & 33 \\
33 & -3 & 36 \\
51 & 45 & -3
\end{pmatrix}\!,
Q_{12} =
\begin{pmatrix}
-64 & 93 & 123 \\
95 & -64 & 68 \\
66 & 101 & -64
\end{pmatrix}\!.
\]
She constructs three public marginal sets using Algorithms~\ref{AlgoAXeqA2}, \ref{AlgoXAeqA2}, and \ref{AlgoAXBYCeqABC}
\[
M_{11} =
\left\{\!
\begin{pmatrix}
0 & 169 & 181 \\
184 & 0 & 200 \\
188 & 194 & 0
\end{pmatrix}\!,
\begin{pmatrix}
0 & 195 & 187 \\
142 & 0 & 184 \\
192 & 191 & 0
\end{pmatrix}\!
\right\}\!,
\]

\[
M_{12} =
\left\{\!
\left(\!
\begin{pmatrix}
20 & 84 & 56 \\
56 & 20 & 59 \\
74 & 77 & 20
\end{pmatrix}\!,
\begin{pmatrix}
-20 & 47 & 33 \\
16 & -20 & 19 \\
49 & 28 & -20
\end{pmatrix}\!
\right)\!,
\left(\!
\begin{pmatrix}
7 & 76 & 43 \\
43 & 7 & 46 \\
74 & 55 & 7
\end{pmatrix}\!,
\begin{pmatrix}
-7 & 57 & 29 \\
56 & -7 & 32 \\
47 & 41 & -7
\end{pmatrix}\!
\right)\!
\right\}\!,
\]
and
\[
M_{13} =
\left\{\!
\begin{pmatrix}
0 & 183 & 189 \\
198 & 0 & 180 \\
170 & 173 & 0
\end{pmatrix}\!,
\begin{pmatrix}
0 & 169 & 193 \\
178 & 0 & 168 \\
199 & 184 & 0
\end{pmatrix}\!
\right\}\!.
\]
She sends $M_{11}$, $M_{12}$, and $M_{13}$ to Bob.
\item
Bob chooses four private matrices:
\[
P_{21} =
\begin{pmatrix}
-77 & 12 & 14 \\
14 & -77 & 19 \\
16 & 13 & -77
\end{pmatrix}\!,
Q_{21} =
\begin{pmatrix}
-82 & 19 & 20 \\
18 & -82 & 19 \\
12 & 16 & -82
\end{pmatrix}\!, 
\]
\[
P_{22} =
\begin{pmatrix}
-68 & 42 & 43 \\
39 & -68 & 45 \\
47 & 37 & -68
\end{pmatrix}\!,
Q_{22} =
\begin{pmatrix}
-8 & 26 & 38 \\
37 & -8 & 34 \\
32 & 27 & -8
\end{pmatrix}\!.
\]
He constructs three public marginal sets using Algorithms~\ref{AlgoAXeqA2}, \ref{AlgoXAeqA2}, and \ref{AlgoAXBYCeqABC}:
\[
M_{21} =
\left\{\!
\begin{pmatrix}
0 & 109 & 118 \\
113 & 0 & 195 \\
138 & 186 & 0
\end{pmatrix}\!,
\begin{pmatrix}
0 & 174 & 103 \\
178 & 0 & 103 \\
106 & 91 & 0
\end{pmatrix}\!
\right\}\!,
\]
\[
M_{22} =
\left\{\!
\left(\!
\begin{pmatrix}
-17 & 84 & 85 \\
83 & -17 & 84 \\
77 & 81 & -17
\end{pmatrix}\!,
\begin{pmatrix}
17 & 118 & 119 \\
117 & 17 & 118 \\
111 & 115 & 17
\end{pmatrix}\!
\right)\!,
\left(\!
\begin{pmatrix}
39 & 140 & 141 \\
139 & 39 & 140 \\
133 & 137 & 39
\end{pmatrix}\!,
\begin{pmatrix}
-39 & 62 & 63 \\
61 & -39 & 62 \\
55 & 59 & -39
\end{pmatrix}\!
\right)\!
\right\}\!,
\]
and
\[
M_{23} =
\left\{\!
\begin{pmatrix}
0 & 194 & 167 \\
171 & 0 & 49 \\
78 & 139 & 0
\end{pmatrix}\!,
\begin{pmatrix}
0 & 154 & 175 \\
102 & 0 & 102 \\
133 & 43 & 0
\end{pmatrix}\!
\right\}.
\]
He sends these sets to Alice.
\item
Alice selects 
\[
C_{21} =
\begin{pmatrix}
0 & 174 & 103 \\
178 & 0 & 103 \\
106 & 91 & 0
\end{pmatrix}\!,
D_{21}= \begin{pmatrix}
39 & 140 & 141 \\
139 & 39 & 140 \\
133 & 137 & 39
\end{pmatrix}\!,
\]
\[
C_{22} = 
\begin{pmatrix}
-39 & 62 & 63 \\
61 & -39 & 62 \\
55 & 59 & -39
\end{pmatrix}\!,
D_{22} =
\begin{pmatrix}
0 & 194 & 167 \\
171 & 0 & 49 \\
78 & 139 & 0
\end{pmatrix}\!,
\]
computes
\[
U =
(C_{21} \otimes P_{11} \otimes W_1 \otimes Q_{11} \otimes D_{21}, C_{22} \otimes P_{12} \otimes W_2 \otimes Q_{12} \otimes D_{22}) = 
\left(\!
\begin{pmatrix}
65 & -8 & 49 \\
31 & 42 & 0 \\
6 & 21 & -8
\end{pmatrix}\!,
\begin{pmatrix}
-101 & -103 & -54 \\
-65 & -67 & -72 \\
-47 & -85 & -36
\end{pmatrix}\!\right),
\]
and sends it to Bob.

\item
Bob selects
\[
C_{11} =
\begin{pmatrix}
0 & 195 & 187 \\
142 & 0 & 184 \\
192 & 191 & 0
\end{pmatrix}\!,
D_{11} =
\!\begin{pmatrix}
7 & 76 & 43 \\
43 & 7 & 46 \\
74 & 55 & 7
\end{pmatrix}\!,
C_{12} =
\begin{pmatrix}
-7 & 57 & 29 \\
56 & -7 & 32 \\
47 & 41 & -7
\end{pmatrix}\!,
D_{12} =
\begin{pmatrix}
0 & 183 & 189 \\
198 & 0 & 180 \\
170 & 173 & 0
\end{pmatrix}\!,
\]
computes
\begin{multline*}
V =
(C_{11} \otimes P_{21} \otimes W_1 \otimes Q_{21} \otimes D_{11}, C_{12} \otimes P_{22} \otimes W_2 \otimes Q_{22} \otimes D_{12}) = \\
\left(\!
\begin{pmatrix}
-109 & -145 & -106 \\
-106 & -95 & -137 \\
-131 & -116 & -145
\end{pmatrix}\!, 
\begin{pmatrix}
-78 & -80 & -38 \\
-15 & -23 & -49 \\
-24 & -62 & -20
\end{pmatrix}\!
\right)\!,
\end{multline*}
and sends it to Alice.
\item 
Alice computes the shared key:
\[
K_A = P_{11} \otimes V_1 \otimes Q_{11} \otimes P_{12} \otimes V_2 \otimes Q_{12} = 
\begin{pmatrix}
-308 & -310 & -315 \\
-305 & -320 & -278 \\
-330 & -332 & -290
\end{pmatrix}\!.
\]
\item 
Bob computes the shared key:
\[
K_B = P_{21} \otimes U_1 \otimes Q_{21} \otimes P_{22} \otimes U_2 \otimes Q_{22} = 
\begin{pmatrix}
-308 & -310 & -315 \\
-305 & -320 & -278 \\
-330 & -332 & -290
\end{pmatrix}.
\]
\end{enumerate}
\end{example}

\begin{remark}
For an eavesdropper to break the protocol means being able to compute the key based on the values of $W_1$, $W_2$, $U$, $V$, $M_{11}$, $M_{12}$, $M_{13}$, $M_{21}$, $M_{22}$, and $M_{23}$. Note that to find the key, it is not enough to find matrices
$P_{11}'$, $Q_{11}'$, $P_{12}'$, $Q_{12}'$, $C_{21}'$, $D_{21}'$, $C_{22}'$, $D_{22}'$ such that
\begin{align*}
C_{21}' &\in M_{21}, \\
(D_{21}', C_{22}') &\in M_{22}, \\
D_{22}' &\in M_{23}, \\
C'_{21} \otimes P'_{11} \otimes W_1 \otimes Q'_{11} \otimes D'_{21} &= U_1, \\
C'_{22} \otimes P'_{12} \otimes W_2 \otimes Q'_{12} \otimes D'_{22} &= U_2
\end{align*}
because $P_{11}' \otimes C_{11}$ must be equal to $P_{11}'$, $D_{11} \otimes Q_{11}' \otimes P_{12}' \otimes C_{12}$ must be equal to $Q_{11}' \otimes P_{12}'$, and $D_{12} \otimes Q_{12}'$ must be equal to $Q_{12}'$ for the unknown $C_{11}$, $D_{11}$, $C_{12}$, and $D_{12}$. 
\end{remark}

This protocol can be generalized.

\begin{protocol}
Alice and Bob agree on a public noncommutative semigroup $\mathcal{S} = \left<S, \cdot, e\right>$, public pairwise commuting subssets
$H_i \subseteq S$, $i \in \{1, \ldots, n\}$, $R_i \subseteq S$, $i \in \{1, \ldots n\}$, and public elements $w_i \in S$, $i \in \{1, \ldots, n\}$.
\begin{enumerate}
\item Alice selects private elements $p_{1i} \in H_i$, $i \in \{1, \ldots, n\}$, and $q_{1i} \in R_i$, $i \in \{1, \ldots n\}$, constructs public marginal sets $M_{11} \subseteq S$, $M_{1i} \subseteq S^2$, $i \in \{2, \ldots, n\}$, and $M_{1,n+1} \subseteq S$ such that
$M_{11} \perp p_{11} \cdot \square$, $M_{1i} \perp \square_1 \cdot  q_{1,i-1} \cdot p_{1i} \cdot \square_2$, $i \in \{2, \ldots n\}$, and $M_{1,n+1} \perp \square \cdot q_{1n}$, and sends $M_{1i}$, $i \in \{1, \ldots, n  +1\}$,  to Bob.

\item Bob selects private elements $p_{2i} \in H_i$, $i \in \{1, \ldots, n\}$, and $q_{2i} \in R_i$, $i \in \{1, \ldots n\}$, constructs public marginal sets $M_{21} \subseteq S$, $M_{2i} \subseteq S^2$, $i \in \{2, \ldots, n\}$, and $M_{2,n+1} \subseteq S$ such that
$M_{21} \perp p_{21} \cdot \square$, $M_{2i} \perp \square_1 \cdot  q_{2,i-1} \cdot p_{2i} \cdot \square_2$, $i \in \{2, \ldots n\}$, and $M_{2,n+1} \perp \square \cdot q_{2n}$, and sends $M_{2i}$, $i \in \{1, \ldots, n  +1\}$,  to Alice.

\item Alice selects private elements $c_{21} \in M_{21}$, $(d_{2,i-1}, c_{2i}) \in M_{2i}$, $i \in \{2, \ldots n\}$, and $d_{2n} \in M_{2,n+1}$, computes 
$$u = (c_{21} \cdot p_{11} \cdot w_1 \cdot q_{11} \cdot d_{21}, \ldots, c_{2n} \cdot p_{1n} \cdot w_n \cdot q_{1n} \cdot d_{2n}), $$ and sends $u$ to Bob.

\item Bob selects private elements $c_{11} \in M_{11}$, $(d_{1,i-1}, c_{1i}) \in M_{1i}$, $i \in \{2, \ldots n\}$, and $d_{1n} \in M_{1,n+1}$, computes 
$$v = (c_{11} \cdot p_{21} \cdot w_1 \cdot q_{21} \cdot d_{11}, \ldots
c_{1n} \cdot p_{2n} \cdot w_n \cdot q_{2n} \cdot d_{1n}),$$ and sends $v$ to Alice.

\item Alice computes her shared secret key
$k_A = \prod_{i=1}^n p_{1i} \cdot v_i \cdot q_{1i}$.

\item Bob computes his shared secret key
$k_B = \prod_{i=1}^n p_{2i} \cdot u_i \cdot q_{2i}$.
\end{enumerate}
\end{protocol}

\section{Conclusion}
In this paper, we introduced the definitions of marginal sets for semigroups and semirings. 
We considered several examples of marginal sets for tropical matrix algebras. 
Also, we demonstrated how this concept can be used to improve some key exchange schemes.

For future work, it would be interesting to explore examples of marginal sets in other classes of semigroups.

Another question to investigate is to find out what information a $w(A_1, \ldots, A_m,\allowbreak  \square_1, \ldots, \square_n)$-marginal set reveals about the matrices $A_1, \ldots, A_m$.

\section*{Funding}
The research of the first author was supported by the State Contract of the Sobolev Institute of Mathematics (Project FWNF-2026-0033).

\renewcommand*{\bibfont}{\small}
\printbibliography

\end{document}